\newcommand \be{\begin{equation}}
\newcommand \bea{\begin{eqnarray}}
\newcommand \ee{\end{equation}}
\newcommand \eea{\end{eqnarray}}
\newcommand{\E}{{\rm E}}
\newcommand{\Var}{{\rm Var}~}
\renewcommand{\(}{\left(}
\renewcommand{\)}{\right)}
\renewcommand{\[}{\left[}
\renewcommand{\]}{\right]}
\renewcommand{\epsilon}{\varepsilon}
\theoremstyle{plain}
\newtheorem{proposition}{Proposition}
\newtheorem{lemma}{Lemma}
\newtheorem{corollary}{Corollary}
\theoremstyle{definition}
\newtheorem{assumption}{Assumption}
\newtheorem{remark}{Remark}
\begin{document}

\title{Zipf's law and maximum sustainable growth
\thanks{Y. Malevergne acknowledges financial support from the  French National Research Agency (ANR) through the ``Entreprises'' Program (Project HYPERCROIS n$^o$ ANR-07-ENTR-008).}
}
\thispagestyle{empty}

\author{Y. Malevergne$^{1,2,3}$, A. Saichev$^{3,4}$ and D. Sornette$^{3,5}$\\
{\small $^1$ Universit\'{e} de Lyon - Universit\'{e} de Saint-Etienne -- Coactis E.A. 4161, France}\\
$^2$ {\small EMLYON Business School -- Cefra, France}\\
$^3$ {\small ETH Zurich -- Department of Management, Technology and Economics, Switzerland}\\
$^4$ {\small Nizhny Novgorod State University -- Department of Mathematics, Russia}\\
$^5$ {\small Swiss Finance Institute, Switzerland}\\
{\small e-mails: ymalevergne@ethz.ch, saichev@hotmail.com and dsornette@ethz.ch}
}

\date{}
\maketitle

\begin{abstract}
Zipf's law states that the number of firms with size greater than $S$ is
inversely proportional to $S$. Most explanations start with Gibrat's
rule of proportional growth but require additional
constraints. We show that Gibrat's rule, at all
firm levels, yields Zipf's law under a balance condition between the
effective growth rate of incumbent firms (which includes their 
possible demise) and
the growth rate of investments in entrant firms. Remarkably, Zipf's law
is the signature of the long-term optimal allocation of
resources that ensures the maximum sustainable growth rate of an
economy.
\end{abstract}
\vspace{2cm}

\noindent
{\bf JEL classification:} G11, G12
\vspace{0.5cm}

\noindent
{\bf Keywords:} Firm growth, Gibrat's law, Zipf's law.

\clearpage
\pagenumbering{arabic}

\centerline{\LARGE Zipf's law and maximum sustainable growth}

\vskip 5cm

\begin{abstract}
Zipf's law states that the number of firms with size greater than $S$ is
inversely proportional to $S$. Most explanations start with Gibrat's
rule of proportional growth but require additional
constraints. We show that Gibrat's rule, at all
firm levels, yields Zipf's law under a balance condition between the
effective growth rate of incumbent firms (which includes their 
possible demise) and
the growth rate of investments in entrant firms. Remarkably, Zipf's law
is the signature of the long-term optimal allocation of
resources that ensures the maximum sustainable growth rate of an
economy.
\end{abstract}
\vspace{2cm}

\noindent
{\bf JEL classification:} G11, G12
\vspace{0.5cm}

\noindent
{\bf Keywords:} Firm growth, Gibrat's law, Zipf's law.
\clearpage

\setstretch{1.5}
\section{Introduction}

The relevance of power law distributions of firm sizes to help understand 
firm and economic growth has been recognized early, for instance by
\citeasnoun{Schumpeter1934}, who proposed that there might be
important links between firm size distributions and firm growth. The
endogenous and exogenous processes and factors that combine 
to shape the distribution of firm sizes can be
expected to be at least partially revealed by the characteristics of the 
distribution of firm sizes.
The distribution of firm sizes has also attracted a great deal of
attention in the recent policy debate \cite[for instance]{Eurostat1998}, 
because it may influence job creation and destruction \cite{Davisetal96}, the
response of the economy to monetary shocks \cite{Gertler1994} and might even be
an important determinant of productivity growth at the macroeconomic level due
to the role of market structure \cite{Peretto1999,Pagano2003,Acsetal1999}.

This article presents a reduced form model that provides a generic explanation for
the ubiquitous stylized observation of power law distributions of firm sizes, and in 
particular of Zipf's law -- i.e., the fact that the fraction of firms of an economy
whose sizes $S$ are larger than $s$ is inversely proportional to $s$:
$\Pr(S>s)\sim s^{-m}$, with $m$ equal (or close) to $1$.
We consider an economy made of a large number of firms that are created
according to a random birth flow, disappear
when failing to remain above a viable size, go bankrupt when an operational fault strikes, and grow or
shrink stochastically at each time step proportionally to their current sizes (Gibrat law).

Our contribution to the ongoing debate on the shape of the distribution of
firms' sizes is to present a theory that encompasses previous approaches and
to derive Zipf's law as the result of the
combination of simple but realistic stochastic processes of firms' birth and
death together with Gibrat's law \cite{Gibrat1931}.  The main result of our approach is that Zipf's law 
is associated with a maximum sustainable growth of investments in the creation of new firms. 
In this respect, the size distribution of firms appears as a device to assess the efficiency and the sustainability of the resources allocation process of an economy.
Another interesting aspect of our framework is the analysis of deviations from the pure Zipf's law (case $m=1$)
under a variety of circumstances resulting from transient imbalances 
between the average growth rate of incumbent firms and the growth rate of investments in new entrant firms.
These deviations from
the pure Zipf's law have been documented for a variety of firm's size proxies 
(e.g.  sales, incomes, number of employees, or total assets), and reported values
for $m$ ranges from $0.8$ to $1.2$ \cite[among many
others]{IS77,Sutton97,Axtell01}. Our approach provides a framework for
identifying their possible (multiple) origins.

In the literature on the growth dynamics of business firms, a well established
tradition describes the change of the firm's size, over a given period of
time, as the cumulative effect of a number of different shocks originated by the
diverse accidents that affected the firm in that period  \cite[among
others]{Kalecki45,IS77,Steindl65,Sutton1998,Geroski1999}. This, together with
Gibrat's law of proportional growth, forms the starting point for various
attempts to explain Zipf's law. However, these attempts generally start with the implicit
or explicit assumption that the set of firms under consideration was born at the
same origin of time and live forever \cite{Gibrat1931,Gabaix99,RHW2007a,RHW2007b}. This approach is equivalent to considering
that the economy is made of only one single firm and that the distribution of
firm sizes reaches a steady-state if and only if the distribution of the size of
a single firm reaches a steady state. This latter assumption is counterfactual or, even
worse, non-falsifiable.

An alternative approach to model a stationary distribution of firm
sizes is to account for the fact that firms do not all appear at the same time
but are born according to a more or less regular flow of newly created firms, as suggested by the common sense\footnote{See \citeasnoun{Dunne1}, \citeasnoun{Reynolds94} or \citeasnoun{BD04}, among many others, for ``demographic'' studies on the populations of firms.}.
\citeasnoun{Simon55} was the first to address this question (see also \citeasnoun{IS77}). 
He proposed to modify Gibrat's model by
accounting for the entry of new firms over time as the overall industry grows.
He then obtained a steady-state distribution of firm sizes with a regularly varying
upper tail whose exponent $m$ goes to one from above, in the limit of a
vanishingly small probability that a new firm is created. This situation is not
quite relevant to explain empirical data, insofar as the convergence toward the steady-state is then
infinitely slow, as noted by \citeasnoun{Krugman96}. More recently,
\citeasnoun{Gabaix99} allowed for birth of new entities, with the probability to
create a new entity of a given size being proportional to the current fraction
of entities of that size and otherwise independent of time. In fact, this assumption
does not reflect the real dynamics of firms' creation. For
instance, \citeasnoun{Bartelsmanetal03} document that entrant firms have a
relatively small size compared with the more mature efficient size they  develop
as they grow.  It seems unrealistic to expect a non-zero
probability for the birth of a firm of very large size, say, of size comparable
to the largest capitalization currently in the market\footnote{We do not
consider spin-off's or M\&A (mergers and acquisitions).}. In this respect, \citeasnoun{Luttmer07}'s model is more realistic than Gabaix's, (who anyway models city sizes rather than firms) insofar as it considers that entrant firms adopt a scaled-down version of the technology of incumbent firms and therefore endogenously set the size of entrant firms as a fraction of the size of operating firms. In this article, we partly follow this view and consider that the size of entrant firms is smaller than the size of incumbent firms. But we depart from Luttmer's because the size of new entrants is not endogenously fixed in our model. We set this parameter exogenously for versatility reasons.

Another crucial ingredient characterizes our model.
The fact that firms can go bankrupt and disappear
from the economy is a crucial observation that is often neglected in models.  
Many firms are known to undergo
transient periods of decay which, when persistent,
may ultimately lead to their exit from business \cite{BD04,Knaup05,Brixy07,Bartelsmanetal03}.
\citeasnoun{Simon60} as well as \citeasnoun{Steindl65} have considered this
stylized fact within a generalization of \citeasnoun{Simon55} where the
decline of a firm and ultimately its exit occurs when its size reaches
zero. In \citeasnoun{Simon60}'s model, the rate of firms' exit exactly
compensates the flow of firms' births so that the economy is stationary and
the steady-state distribution of firm sizes exhibit the same upper tail behavior
as in \citeasnoun{Simon55}. In contrast, \citeasnoun{Steindl65} includes births
and deaths but within an industry with a growing number of firms. A steady-state
distribution is obtained whose tail follows a power law with an exponent that
depends on the net entry rate of new firms and on the average growth rate of
incumbent firms. Zipf's law is only recovered in the limit where the net entry
rate of new firms goes to zero. Both models rely on the existence of a minimum
size below which a firm runs out of business. This hypothesis corresponds to the
existence of a minimum efficient size below which a firm cannot operate, as is
well established in economic theory. However, there may be in general more than
one minimum size as the exit (death) level of a firm has no reason to be equal
to the size of a firm at birth. In the afore mentioned models, these two sizes
are assumed to be equal,  while there is {\em a priori} no reason for such an
assumption and empirical evidence {\it a contrario}. In our model, we 
allow for two different thresholds, the first one for the typical size of entrant firms and the second one for the exit level. This second level is assumed to be lower than the first one, even if recent evidence seems to suggest that firms might enter with a size less than their minimum efficient size \cite{AA2001} and then rapidly grow beyond this threshold in order to survive.

In addition to the exit of a firm resulting from its value decreasing below a
certain level, it sometimes happens that a firm encounters financial troubles
while its asset value is still fairly high. One could cite the striking examples of Enron
Corp. and Worldcom, whose market capitalization were supposedly high (actually the
result of inflated total asset value of about \$11 billion for Worldcom and
probably much higher for Enron) when they went bankrupt. More recently, since mid-2007
and over much of 2008, the cascade of defaults and bankruptcies (or near bankruptcies) associated
with the so-called subprime crisis by some of the largest financial and 
insurance companies illustrates that shocks
in the network of inter-dependencies of these companies can be sufficiently 
strong to destabilize them. Beyond these trivial examples, there is  
a large empirical literature on firm entries and exits, that suggests the need for
taking into account the existence of failure of large firms \cite{Dunne1,Dunne2,Bartelsmanetal03}.
To the extent that the empirical literature documents
a sizable exit at all size categories, we suggest that it is timely to
study a model with both firm exit at a size lower bound and due to a size-independent
hazard rate. Such a model constitutes a better approximation to the empirical data 
than a model with only firm exit at the lower bound.
\citeasnoun{Gabaix99} briefly considers an analogous situation (at least from a
formal mathematical perspective) and suggests that it may have an important impact on
the shape of the distribution of firm sizes.

To sum up, we consider an economy of firms undergoing continuous 
stochastic growth processes with births and deaths playing a central role at time scales as short as a few years. 
We argue that death processes are especially important to
understand the economic foundation of Zipf's law and its robustness.
In order to make our model closer to the data, we 
consider two different mechanisms for the exit of a firm: (\i) when the firm's 
size becomes smaller than a given minimum threshold and (\i\i) when an exogenous
shock occurs, modeling for instance operational risks, independently of the size
of the firm. The other important issue is to describe adequately the birth
process of firms. As a counterpart to the continuously active death process, we
will consider that firms appear according to a stochastic flow process
that may depend on macro-economic variables and other factors. The
assumptions underpinning this model as well as the main results derived from it 
are presented in section 2. Section 3 puts them in perspective in the light of 
recent theoretical models and empirical findings on 
the existence of deviations from Zipf's law. Section 4 provides 
complementary results which are important from an empirical point of view. All the proofs are gathered in the appendix at the end of the article.

\section{Exposition of the model and main results}

\subsection{Model setup}

We consider a reduced form model, with a first set of three assumptions, in which 
firms are created at random times $t_i$'s with initial random asset values 
$s_0^i$'s drawn from some given statistical distribution. More precisely:
\begin{assumption}
\label{assumption1}
There is a flow of firm entry, with births of new firms following a Poisson process
with exponentially varying intensity $\nu(t) =\nu_0 \cdot e^{d \cdot t}$, with $d \in {\mathbb R}$;
\end{assumption}

This assumption generalizes most previous approaches that address the question of modeling the size distribution of firms. In the basic model of \citeasnoun{Gabaix99} or in \citeasnoun{RHW2007a,RHW2007b}, all firms (or cities) are supposed to enter at the same time, which is technically equivalent to consider that there is only one firm in the economy. In Simon's models and in \citeasnoun{Luttmer07}, a flow of firms birth is considered, but births occur deterministically at discrete time steps (Simon) or continuously in time (Luttmer). Assumption~\ref{assumption1} allows for a {\em random} flow of birth.

As will be clear later on, the value of the parameter $\nu_0$ is not really relevant for the understanding of the shape of the distribution of firm sizes. In contrast, the parameter $d$, which characterizes the growth or the decline of the intensity of firm births, plays a key role insofar as it is directly related to the net growth rate of the population of firms.

We also assume that the entry size of a new incumbent firm is random, with a typical size which is time 
varying in order to account for changing installment costs, for instance. The size of a firm can represent its assets value, but for most of the developments in this article, the size could be measured as well by the number of employees or the sales revenues.
\begin{assumption}
\label{assumption:iii-a}
At time $t_i$, $i \in \mathbb N$, the initial size of the new entrant firm $i$ is given by $s_0^i=s_{0,i} \cdot e^{c_0 t_i}$, $c_0 \in \mathbb R$. The random sequence $\left\{s_{0,i}\right\}_{i \in \mathbb N}$ is the result of independent and identically distributed random draws from a common random variable $\tilde s_0$. All the draws are independent of the entry dates of the firms.
\end{assumption}

This assumption exogenously sets the size of entrant firms. It departs from \citeasnoun{Gabaix99} generalized model and \citeasnoun{Luttmer07} model by considering a
distribution of initial firm sizes that is unrelated to the distribution of
already existing firms. Besides, it does not imposes that all the firms enter with the same (minimum) size, as in \citeasnoun{Simon60} or \citeasnoun{Steindl65} which are retrieved by choosing a degenerated distribution of entrant firms and $c_0=0$. As we shall see later on, apart from the growth rate $c_0$ of the typical size of a new entrant firm, the characteristics of the distribution of initial firm sizes is, to a large extent, irrelevant for the shape of the upper tail of the steady-state distribution of firm sizes.

\begin{remark}
As a consequence of assumptions~\ref{assumption1} and \ref{assumption:iii-a}, the average capital inflow per unit time -- i.e. the average amount of capital invested in the creation of new firms per unit time -- is
\bea
dI(t) &=& \nu(t) \E \[\tilde s_0\] e^{c_0 t}\, dt~ ,\\
&=& \nu_0 \E \[\tilde s_0\] e^{(d+c_0) t}\, dt~ , \label{eq:invest}
\eea
and $d+c_0$ appears as the average growth rate of investment in new firms.
\end{remark}

As usual, we also assume that
\begin{assumption}
\label{assumption:Gibrat}
Gibrat's rule holds.
\end{assumption}
Assumption \ref{assumption:Gibrat} means that, in the continuous time limit, the size $S_i(t)$ of the $i^{th}$ firm of the economy at time $t \ge t_i$, conditional on its initial size $s_0^i$, is solution to the stochastic differential equation
\be
\label{jhojgfwv}
dS_i(t) = S_i(t) \( \mu \, dt + \sigma \,  dW_i(t)\)~, \qquad t \ge t_i ~, \qquad
S_i(t_i) = s_0^i~.  
\ee
The drift $\mu$ of the process can be interpreted as the
rate of return or the ex-ante growth rate of the firm. Its volatility is $\sigma$ 
and $W_i(t)$ is a standard Wiener process.
Note that the drift $\mu$ and the volatility $\sigma$ are the same for all firms.

This assumption together with assumption~\ref{assumption1} extends \citename{Simon55}'s model by allowing the creation of new firms at random times,  as already mentioned, and more importantly decouples the growth process of existing firms from the process of creation of new firms. It thus makes the model more realistic.

Let us now consider two exit mechanisms, based on the 
following empirical facts.  Referring to \citeasnoun{BD04}, the yearly rate of death of Italian firms is, on average, equal to $5.7\%$ with a maximum of about $20\%$ for some specific industry branches. \citeasnoun{Knaup05} examined the business survival characteristics of all establishments that started in the United States in the late 1990s when the boom of much of that decade was not yet showing signs of weakness, and finds that, if 85\% of firms survive more than one year, only 45\% survive more than four years. \citeasnoun{Brixy07} analysed the factors that influence regional birth and survival rates of new firms for 74 West German regions over a 10-year period. They documented significant regional factors as well as variability in time: the 5-year survival rate fluctuates between 45\% and 51\% over the period from 1983 to 1992. \citeasnoun{Bartelsmanetal03} confirmed that a large number of firms enter and exit most markets every year in a group of ten OECD countries: data covering the first part of the 1990s show the firm turnover rate (entry plus exit rates) to be between 15 and 20 percents in
the business sector of most countries, i.e., a fifth of firms are either recent entrants, or will close down within the year.

First of all, we assume that firms disappear when their asset values become smaller than some pre-specified minimum level $s_{\min}$.
\begin{assumption}
\label{assumption:c1}
There exists a minimum firm size $ s_{\min}(t) = s_1 \cdot e^{c_1 \cdot t}$, that varies at the constant rate $c_1 \le c_0$, below which firms exit.
\end{assumption}
This idea has been considered in several models of firm growth (see e.g. \citeasnoun{deWit05} and references therein) and can be related to the existence of a minimum efficient size in the presence of fixed operating costs. Besides, as for the typical size of new entrant firms, we assume that the minimum size of incumbent firms grows at the constant rate $c_1 \ge 0$, so that $s_{\min}(t):= s_1 e^{c_1 \cdot t}$. But $c_1$ is a priori different from $c_0$. It is natural to require that the lower bound $\underline{s}_0$ of the distribution of $\tilde s_0$ be larger than $s_1$ and that $c_0 \ge c_1$ in order to ensure that no new firm enters the economy with an initial size smaller than the minimum firm size and then immediately disappears\footnote{In fact, it seems that the typical size of entrant firms is much smaller than the minimum efficient size \cite[and references therein]{AA2001}. It means that two exit levels should be considered; one for old enough firms and another one for young firms. For tractability of the calculations, we do not consider this situation.}. The condition $s_1 e^{c_1 \cdot t}  < \underline{s}_0 e^{c_0 \cdot t}$ implies that the economy started at a time $t_0$ larger than
\be \label{eq:birthdate}
t_* = \frac{1}{c_1 - c_0} \cdot \ln \(\frac{\underline{s}_0}{s_1}\) < 0~.
\ee
We could alternatively choose $\underline{s}_0 = s_1$ so that the economy starts at time $t=0$. Another approach, suggested for instance by \citeasnoun{Gabaix99},
considers that firms cannot decline below a minimum size and remain in
business at this size until they start growing up again. Here, we have not used this rather artificial
mechanism.

Secondly, we consider that firms may disappear abruptly as the result of an unexpected large event (operational risk, fraud,...), even if their sizes are still large. Indeed, while it has been established that a first-order characterization for firm death involves lower failure rates for larger firms \cite{Dunne1,Dunne2}, \citeasnoun{Bartelsmanetal03} also state that, for sufficiently old firms, there seems to be no difference in the firm failure rate across size categories. Consequently
\begin{assumption}
\label{assumption:5}
There is a random exit of firms with constant hazard rate $h \ge \max \{-d, 0\}$ which is independent of the size and age of the firm.
\end{assumption}
\begin{remark}
As will become clear later on, the constraint $h \ge \max \{-d, 0\}$ is only necessary to guaranty that the distribution of firm sizes is normalized in the small size limit if there is no minimum firm size. The case $d>0$ ensures that the population of firms grows at the long term rate $d$ while the case $d<0$ allows describing an industry branch that first expands, then reaches a maximum and eventually declines at the rate $d$. Such a situation is quite realistic, as illustrated by figure 2 in \citeasnoun{Sutton97} which depicts the number of firms in the U.S. tire industry. Notice, in passing, that the case $h<0$ is also sensible. It corresponds to the situation considered by \citeasnoun{Gabaix99} in his generalized model, where firms are allowed to enter with an initial size randomly drawn from the size distribution of incumbent firms.
\end{remark}

Under assumptions~\ref{assumption1} and \ref{assumption:5}, i.e. not considering for the time being the mechanism of exit
of firms at the minimum size,  the average number $N_t$ of operating firms satisfies
\be
\frac{d N_t}{dt} + h N_t = \nu(t),
\ee
so that, assuming that the economy starts at $t=0$ for simplicity, we obtain
\be
N_t = \frac{\nu_0}{d+h} \[e^{d \cdot t} -  e^{-h \cdot t}\].
\ee
Consequently, the rate of firm birth, given by $\nu(t) /N_t$, is given by $\frac{d+h}{1 - e^{-(d+h) \cdot t}} \to d+h$ for $t$ large enough. The range of values of $d+h$ has been reported in many empirical studies. For instance, \citeasnoun{Reynolds94} give the regional average firm birth rates (annual firm births per 100 firms) of several advanced countries in different time periods: $10.4\%$ (France; 1981-1991), $8.6\%$ (Germany; 1986), $9.3\%$ (Italy; 1987-1991), $14.3\%$ (United Kingdom; 1980-1990), $15.7\%$ (Sweden; 1985-1990), $6.9\%$ (United States; 1986-1988). They also document a large variability from one industrial sector to another. More interestingly, \citeasnoun{BD04} as well as \citeasnoun{Dunne1} reports both the entry and exit rate for different sectors in Italy and in the US respectively. In every cases, even if sectorial differences are reported, the average aggregated entry and exit rates are remarquably close. This suggests that $d$ should be close to zero while $h$ is about $4-6\%$. The net growth rate of the population of firms, given by $\frac{1}{N_t} \frac{d N_t}{d t} = \frac{\nu(t)}{N_t} - h$ tends to $d$ for $t$ large enough, as announced after assumption~\ref{assumption1}. 

\subsection{Results}

Equipped with this set of five assumptions, we can now define
\be
\label{eq:m}
m:= \frac{1}{2} \[\(1 - 2 \cdot \frac{\mu - c_0}{\sigma^2}\) + 
\sqrt{\(1 - 2 \cdot \frac{\mu - c_0}{\sigma^2}\)^2 + 8 \cdot \frac{d+h}{\sigma^2}} \],
\ee
and derive our main result (see appendix~\ref{app:prop1} for the proof):
\begin{proposition}
\label{prop1}
Under the assumptions 1-5, provided that $\E\[\tilde s_0^m\] < \infty$, \\
for $t - t_* \gg \[ \(\mu - \frac{\sigma^2}{2} - c_0 \)^2 + 2 \sigma^2 (d+h)\]^{-1/2}$, the average distribution of firm's sizes follows an asymptotic power law with tail index $m$ given by (\ref{eq:m}),  in the following sense: the average number of firms with size larger than $s$ is proportional to $s ^{-m}$ as $s \to \infty$.
\end{proposition}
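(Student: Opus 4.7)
The plan is to compute the expected density $\bar n(s,t)$ of firms of size $s$ at time $t$ by conditioning on the birth time and the initial size. By Assumptions~1 and 2, the Poisson birth structure combined with the independence of the initial sizes gives
\begin{equation*}
\bar n(s,t) \;=\; \int_{t_*}^{t} \nu_0\, e^{d t'}\, e^{-h(t-t')}\, \E\!\left[ p^*\!\left(s,t \,\big|\, \tilde s_0\, e^{c_0 t'},\, t'\right) \right] dt',
\end{equation*}
where $p^*$ denotes the transition density of the geometric Brownian motion of Assumption~3 absorbed at the moving barrier $s_1 e^{c_1 u}$ of Assumption~4, and the factor $e^{-h(t-t')}$ encodes the independent exponential survival under Assumption~5. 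The first observation is that for $s \gg s_1 e^{c_1 t}$, reflection/image arguments show that the absorbing boundary perturbs $p^*$ only by a Gaussian correction in $\log s - c_1 t$, so the tail of $\bar n(s,t)$ is captured by substituting the free log-normal transition density for $p^*$.

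Introduce the shifted log-variable $u := \log s - c_0 t - \log \tilde s_0$ and the elapsed time $\tau := t - t'$, and set $\mu_0 := \mu - c_0 - \sigma^2/2$. The quadratic expansion of the Gaussian exponent together with the $e^{dt' - h(t-t')}$ factor rearranges the integrand as $\tau^{-1/2}\cdot\exp\!\left(-A\tau - \tfrac{u^2}{2\sigma^2 \tau}\right)\cdot e^{u\mu_0/\sigma^2}$ times $s$-independent prefactors, where $A := d + h + \mu_0^2/(2\sigma^2)$ is positive by Assumption~5. The classical identity
\begin{equation*}
\int_0^{\infty} \tau^{-1/2}\, e^{-A\tau - B/\tau}\, d\tau \;=\; \sqrt{\pi/A}\, e^{-2\sqrt{AB}}, \qquad A,\,B > 0,
\end{equation*}
applied with $B = u^2/(2\sigma^2)$, converts the $\tau$-integral into $\exp\!\left( \tfrac{u}{\sigma^2}\bigl[\mu_0 - \sqrt{\mu_0^2 + 2\sigma^2 (d+h)}\bigr] \right)$. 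Re-expressing $u$ in terms of $s$ and $\tilde s_0$ and restoring the $1/s$ Jacobian of the log-normal density gives exactly $\tilde s_0^{\,m}\cdot s^{-m-1}$ times a time-dependent prefactor, where $m$ is the positive root of the quadratic in~(\ref{eq:m}). Taking expectation over $\tilde s_0$ then yields $\bar n(s,t) \sim C(t)\,\E\!\left[\tilde s_0^{\,m}\right] s^{-m-1}$, which integrates in $s$ to produce the claim that the average number of firms of size greater than $s$ is proportional to $s^{-m}$.

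The hypothesis $\E[\tilde s_0^{\,m}]<\infty$ is precisely what is needed to convert the expectation over initial sizes into a finite constant after $\tilde s_0^{\,m}$ has been factored out of the saddle-point exponent, while the condition $t - t_* \gg [\mu_0^2 + 2\sigma^2(d+h)]^{-1/2}$ ensures that the truncated $\tau$-window $[0,t-t_*]$ is much larger than the intrinsic time scale $(2A)^{-1/2}$ of the convolution, allowing $\int_0^{t-t_*}$ to be replaced by $\int_0^\infty$ in the regime producing the power-law tail.

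The main obstacles I expect are twofold: first, justifying rigorously that restoring the absorbing barrier does not shift the tail exponent, which is handled by bounding the killed transition density against the free one via the method of images for geometric Brownian motion and showing the barrier-induced correction is subdominant to $s^{-m}$ as $s \to \infty$; and second, legitimizing the interchange of the integrations over $\tilde s_0$, $\tau$ and the limit $s\to\infty$ by Fubini together with the $\tilde s_0^{\,m}$-moment hypothesis. A conceptually cleaner route that avoids the saddle-point analysis is to pass to the Mellin transform of $\bar n$ in $s$: since the GBM propagator diagonalizes via $\E[S(t)^m \mid S(t')=s_0] = s_0^{\,m}\, e^{\psi(m)(t-t')}$ with $\psi(m) := m\mu + m(m-1)\sigma^2/2$, the Mellin transform is $\E[\tilde s_0^{\,m}]\,\nu_0 \int_{t_*}^{t} e^{(d+mc_0)t' + (\psi(m)-h)(t-t')} dt'$, which has its rightmost singularity on the positive axis exactly at the positive root of $\psi(m) - mc_0 - (d+h) = 0$, namely (\ref{eq:m}); a standard Mellin--Tauberian argument then converts this pole into the stated $s^{-m}$ tail.
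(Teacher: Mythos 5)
Your proposal is correct in substance and follows essentially the same route as the paper's appendix: superpose the killed geometric-Brownian-motion density over Poisson birth dates weighted by $e^{-(d+h)\theta}$, evaluate the age integral via the classical Laplace-type identity (the paper uses the finite-horizon version, A\&S 7.4.33, yielding erfc's that it then expands for $t-t_*$ large, which is exactly your replacement of $\int_0^{t-t_*}$ by $\int_0^\infty$), and finally average over $\tilde s_0$ using $\E[\tilde s_0^m]<\infty$; your exponent $\frac{1}{\sigma^2}\bigl[\sqrt{\mu_0^2+2\sigma^2(d+h)}-\mu_0\bigr]$ agrees with (\ref{eq:m}). One caveat: your justification that the absorbing-barrier (image) term is ``subdominant to $s^{-m}$ as $s\to\infty$'' is not quite right — after the age integration it contributes at the \emph{same} order $s^{-m}$, merely multiplied by $-\bigl(s_0(t)/s_{\min}(t)\bigr)^{-\alpha(\eta)}$, as the paper's exact formula (\ref{gstlimit}) shows with its prefactor $1-\rho(t)^{-\alpha(\eta)}\in(0,1)$; since that prefactor is strictly positive the exponent conclusion is unaffected, but the free-density substitution gets the constant wrong and the subdominance argument should be replaced by the exact image computation (or by two-sided bounds). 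Your Mellin-transform sketch is a genuinely different and tidy way to see why the root of $\psi(m)-mc_0-(d+h)=0$ is the tail index, though as stated it ignores the barrier and needs the large-$(t-t_*)$ limit before any pole appears.
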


\begin{remark}
Condition $\E\[\tilde s_0^m\] < \infty$ in Assumption~\ref{assumption:iii-a} means that the fatness of the initial 
distribution of firm sizes at birth is less than the natural fatness resulting 
from the random growth. Such an assumption is not always satisfied, in 
particular in \citeasnoun{Luttmer07}'s model where, due to imperfect imitation, 
the size of entrant firms is a fraction of the size of incumbent firms.
\end{remark}

One can see that the tail index increases, and therefore the distribution of firm sizes becomes thinner tailed, as 
$\mu$ decreases and as $h$, $c_0$, and $d$ increase. This dependence can be easily rationalized. Indeed, the smaller the expected growth rate $\mu$, the smaller the fraction of large firms, hence the thinner the tail of the size distribution and the larger the tail index $m$. The larger $h$, the smaller the probability for a firm to become large, hence a thinner tail and a larger $m$. As for the impact of $c_0$, rescaling the firm sizes by $e^{c_0 \cdot t}$, so that the mean size of entrant firms remains constant, does not change the nature of the problem. The random growth of firms is then observed in the moving frame in which the size of entrant firms remains constant on average. Therefore, the size distribution of firms is left unchanged up to the scale factor $e^{c_0 \cdot t}$. Since the average growth rate of firms in the new frame becomes $\mu'=\mu - c_0$, the larger $c_0$, the smaller $\mu'$, hence the smaller the probability for a firm to become 
relatively larger than the others, the thinner the tail of the distribution of firm sizes and thus the larger $m$. Finally, the larger $d$ is, the larger the fraction of young firms, which leads to a relatively larger fraction of firms with sizes of the order of the typical size of entrant firms
and thus the upper tail  of the size distribution becomes relatively thinner and $m$ larger.

As a natural consequence of
proposition~\ref{prop1}, we can assert that
\begin{corollary}
\label{corol1}
Under the assumptions of proposition~\ref{prop1}, the mean distribution of firm sizes admits a
well-defined steady-state distribution which follows Zipf's law (i.e. $m=1$) if,
and only if, 
\be
\mu - h  = d + c_0~.
\label{ghtwbvgw}
\ee
\end{corollary}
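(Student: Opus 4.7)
The plan is to derive Corollary~\ref{corol1} as a direct algebraic consequence of Proposition~\ref{prop1}: Zipf's law corresponds by definition to the tail index $m=1$, so I simply need to set $m=1$ in the closed form (\ref{eq:m}) and solve for the model parameters. No new probabilistic input is required beyond what Proposition~\ref{prop1} already supplies.

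First I would introduce the shorthand $A := 1 - 2(\mu-c_0)/\sigma^2$ so that (\ref{eq:m}) reads $2m = A + \sqrt{A^2 + 8(d+h)/\sigma^2}$. Setting $m=1$ and isolating the square root yields $\sqrt{A^2 + 8(d+h)/\sigma^2} = 2 - A$. Before squaring, I would check that the right-hand side is nonnegative, i.e.\ $A \le 2$. Under Assumption~\ref{assumption:5}, $h \ge \max\{-d,0\}$, so $d+h \ge 0$; comparing this with the identity obtained below will confirm $A \le 2$ automatically, justifying the squaring step a posteriori (alternatively, in the Zipf regime one expects the ``drift minus variance'' term $\mu - c_0 - \sigma^2/2$ to be nonnegative, which already gives $A \le 0 \le 2$).

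Squaring both sides causes the $A^2$ terms to cancel, leaving $4 - 4A = 8(d+h)/\sigma^2$, i.e.\ $1 - A = 2(d+h)/\sigma^2$. Substituting back the definition of $A$ gives $2(\mu-c_0)/\sigma^2 = 2(d+h)/\sigma^2$, which is exactly the balance condition (\ref{ghtwbvgw}), $\mu - h = d + c_0$. Conversely, if (\ref{ghtwbvgw}) holds, then $A = 1 - 2(d+h)/\sigma^2$ and one checks directly that $A + \sqrt{A^2 + 8(d+h)/\sigma^2} = A + |2-A| = 2$, so $m=1$. The only minor point is that the steady-state distribution is well defined: since $m=1>0$, the tail $s^{-m}$ is integrable at infinity, and the moment hypothesis $\E[\tilde s_0^m] = \E[\tilde s_0] < \infty$ from Proposition~\ref{prop1} (typically trivial in applications) ensures the asymptotic regime is attained.

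The main obstacle is essentially bookkeeping: verifying that the branch of the square root picked in (\ref{eq:m}) is consistent with $m=1$ (which reduces to checking $A \le 2$, automatic from Assumption~\ref{assumption:5}), and noting that the equivalence is two-way, so that (\ref{ghtwbvgw}) is both necessary and sufficient for Zipf's law within the family of power-law tails furnished by Proposition~\ref{prop1}.
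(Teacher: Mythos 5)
Your proposal is correct and matches the paper's intended argument: the corollary is stated as a direct consequence of Proposition~\ref{prop1}, obtained exactly as you do by setting $m=1$ in (\ref{eq:m}) and solving for the parameters (the squaring step is indeed harmless, since $m=1$ forces $2-A=\sqrt{A^2+8(d+h)/\sigma^2}\ge 0$ automatically, and the converse check uses $d+h\ge 0$ from Assumption~\ref{assumption:5} to resolve the absolute value). Nothing essential is missing.
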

\begin{remark}
In an economy where the amount of capital invested in the creation of new firms is constant per unit time, namely
\be
\label{eq:mhqeg}
\nu(t) \cdot s_0(t) = const.~ ,
\ee
we necessarily get $d+ c_0 = 0$ so that the balance condition reads $\mu = h$.
\end{remark}

To get an intuitive meaning of the condition in corollary~\ref{corol1}, let us state the following result (see the proof in appendix~\ref{app2}):
\begin{proposition}
\label{prop2}
Under the assumptions of proposition~\ref{prop1}, the long term average growth rate of  the overall economy is $\max \left\{\mu-h, d+c_0 \right\}$.
\end{proposition}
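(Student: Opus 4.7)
The plan is to compute the expected aggregate size of the economy,
$$V(t) := \E\!\left[\sum_{t_i \le t} S_i(t)\, \mathbf{1}_{i\text{ alive at }t}\right],$$
and extract its long-term exponential rate. Using Campbell's formula for the Poisson flow of births (Assumption~\ref{assumption1}) together with the independence of the initial size (Assumption~\ref{assumption:iii-a}), of the Gibrat diffusion (Assumption~\ref{assumption:Gibrat}) and of the size-independent hazard (Assumption~\ref{assumption:5}), $V(t)$ reduces to a scalar integral over the birth date $\tau$,
$$V(t) = \nu_0\, \E[\tilde s_0]\, e^{(\mu-h)\,t} \int_{t_*}^{t} e^{(d+c_0-\mu+h)\,\tau}\, q(t-\tau,\tau)\, d\tau,$$
where $q(u,\tau) \in (0,1]$ is the mean fraction of the unconstrained expected size of an age-$u$ firm born at $\tau$ that has not yet been absorbed by the moving barrier $s_{\min}$.

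The second step is to verify that $q$ affects only a bounded prefactor. Applying optional stopping to the martingale $S_i(\tau+u)\,e^{-\mu u}$ at the first-passage time $\tilde\tau$ of $S_i$ to $s_{\min}$ yields the closed-form identity
$$q(u,\tau) = 1 - \frac{s_1}{\E[\tilde s_0]}\, e^{(c_1-c_0)\tau}\, \E\!\left[e^{(c_1-\mu)\tilde\tau}\,\mathbf{1}_{\tilde\tau \le u}\right].$$
Since $c_1 \le c_0$ and $\underline{s}_0 > s_1$ (so that $s_1/\E[\tilde s_0] < 1$), standard estimates on absorbed geometric Brownian motion show that the subtractive term stays bounded away from $1$ uniformly in $(u,\tau)$; hence $q$ is pinched between a positive constant and $1$ for $\tau$ large enough, and contributes only a bounded multiplicative factor to the integral.

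It then remains to evaluate the elementary integral. With $\alpha := d+c_0 - (\mu-h)$,
$$\int_{t_*}^{t} e^{\alpha\tau}\,d\tau \;\asymp\; \begin{cases} 1 & \text{if } \alpha < 0, \\ t-t_* & \text{if } \alpha = 0, \\ \alpha^{-1} e^{\alpha t} & \text{if } \alpha > 0, \end{cases}$$
so that $V(t) \asymp e^{\max(\mu-h,\,d+c_0)\,t}$ up to at most a linear-in-$t$ factor at criticality, giving the long-term exponential growth rate $\max\{\mu-h,\, d+c_0\}$. The main obstacle is a clean control of $q$: one must justify uniform integrability for the optional-stopping identity (in particular bounding $\E\!\left[e^{(c_1-\mu)\tilde\tau}\right]$ when $c_1 > \mu$) and handle the marginal regime $c_0 = c_1$ where the $\tau$-decay of the subtractive term is lost; the remaining asymptotic evaluation is routine.
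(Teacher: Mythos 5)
Your overall route is the same as the paper's: express the mean aggregate size $\Omega(t)$ as an integral over birth dates of the hazard-discounted expected firm size (the paper does this through the first-moment measure of the marked point process) and read off the dominant exponential from the resulting elementary integral, with the value $d+c_0$ at criticality up to a factor $t$. The one place you go beyond the printed proof is the barrier: the paper simply sets $s_{\min}=0$ ``for simplicity'' and asserts that the case $s_{\min}\neq 0$ follows along the same lines, whereas you keep the absorbing barrier and try to control it through the optional-stopping identity for the martingale $S_i(\tau+u)e^{-\mu u}$. That identity is correct as stated, and your worry about uniform integrability is unnecessary: the stopping time $u\wedge\tilde\tau$ is bounded, so Doob's optional sampling theorem applies directly.

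The genuine gap is the pinching claim. Under Assumptions 1--5 it is not true that the subtractive term is uniformly bounded away from $1$, i.e.\ that $q$ admits a positive lower bound uniform in the age $u$. The assumptions do not exclude $\mu\le c_1-\tfrac{\sigma^2}{2}$ (nothing ties $\mu$ to $c_1$), and in that regime the expected size of the surviving firms grows strictly slower than $e^{\mu u}$: a direct computation with the image density (or the first-passage Laplace transform) shows $q(u,\tau)$ decays like $e^{-\left(\mu-c_1+\sigma^2/2\right)^2 u/(2\sigma^2)}$ up to polynomial factors, so your ``bounded multiplicative factor'' step fails exactly where the incumbent exponent $\mu-h$ would have to be extracted from the integral; the caveats you flag ($c_0=c_1$, integrability of $e^{(c_1-\mu)\tilde\tau}$) are not the real obstruction. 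The conclusion nevertheless survives, but for a reason you must supply: if $\mu+\tfrac{\sigma^2}{2}\le c_1\le c_0$, then Assumption~\ref{assumption:5} ($d+h\ge0$) gives $\mu-h\le c_0-h-\tfrac{\sigma^2}{2}<d+c_0$, so in that regime the maximum equals $d+c_0$, which is already produced by firms of bounded age (for which $q$ is bounded below), while $q\le 1$ always yields the matching upper bound $\Omega(t)\lesssim e^{\max\{\mu-h,\,d+c_0\}t}$. With that case distinction — $\mu-c_1+\tfrac{\sigma^2}{2}>0$, where your pinching argument does go through (the limit of $e^{-\mu u}\E\left[S_u\mathbf{1}_{\tilde\tau>u}\right]$ is a positive constant), versus $\mu-c_1+\tfrac{\sigma^2}{2}\le0$, where the entrant channel necessarily dominates — your argument closes, and would then actually be more complete than the paper's own treatment.
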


The term $d + c_0$ quantifies the growth rate of investments in new entrant firms, resulting
from the growth of the number of entrant firms (at the rate $d$) and the growth of the size of new entrant firms (at the rate $c_0$). The term $d$ reflects several factors, including improving pro-business legislation and tax laws as well as increasing entrepreneurial spirit. The latter term $c_0$ is essentially due to time varying installment costs, which can be negative in a pro-business economy. 

The other term $\mu - h$ represents the average growth rate of an incumbent firm. Indeed, considering a running firm at time $t$, during the next instant $dt$, it will either exit with probability $h \cdot dt$ (and therefore its size
declines by a factor $-100\%$) or grow at an average rate equal to $\mu \cdot
dt$, with probability $(1- h \cdot dt)$. The coefficient $\mu$ can be called the conditional growth rate
of firms, conditioned on not having died yet. Then, the expected growth rate over the 
small time increment $dt$ of an incumbent firm is $(\mu-h) \cdot dt + O\(dt^2\)$. As shown by the following equation, drawn from appendix~\ref{app2}, the average size of the economy $\Omega(t)$ (if we neglect the exit of firms by lack of a sufficient size) reads
\be
\Omega(t) = \int_0^t e^ {(\mu-h) \cdot (t-u)} dI(u)~,
\ee
where $I(t)$ is the average capital inflow invested in the creation of news firms per unit time (see eq. \ref{eq:invest}). Thus $\mu-h$ is also the return on investment of the economy.

Thus, the long term average growth of the economy is driven either by the growth of investments in new firms, whenever $d+c_0 > \mu - h$, or by the growth of incumbent firms, whenever $\mu - h > d+ c_0$. The former case does not really make sense, on the long run. Indeed, it would mean that the growth of investments in new firms can be sustainably larger than the rate of return of the economy. Such a situation can only occur if we assume that the economy is fueled by an inexhaustible source of capital, which is obviously unrealistic. 
As a consequence, it is safe to assume $\mu-h \ge d+ c_0$ on the long run. The regime 
$d+c_0 > \mu - h$ might however describe transient bubble regimes developing under
unsustainably large capital creation \cite{Brookings}.

\begin{proposition}
\label{prop:main}
In a growing economy whose growth is driven by that of incumbent firms, the tail index of the size distribution is such that $m \le 1$.\\
Along a balanced growth path, which corresponds to a maximum sustainable growth rate of the investment in new firms, the tail index of the size distribution is equal to one.
\end{proposition}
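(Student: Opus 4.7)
The plan is to reduce both claims to an algebraic property of the closed-form expression for $m$ given in equation (\ref{eq:m}), and then to connect the resulting inequality to the economic characterization of a growing economy supplied by Proposition~\ref{prop2} and by the discussion preceding Proposition~\ref{prop:main}.

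First, I would study the sign of $m-1$ directly from (\ref{eq:m}). Writing $x := (\mu - c_0)/\sigma^2$ and $y := (d+h)/\sigma^2$, equation (\ref{eq:m}) reads
\begin{equation*}
m = \tfrac12\Bigl[(1-2x) + \sqrt{(1-2x)^2 + 8y}\Bigr].
\end{equation*}
The inequality $m\le 1$ is equivalent to $\sqrt{(1-2x)^2+8y}\le 1+2x$. Since $y\ge 0$ forces $1+2x\ge 0$ whenever the inequality is feasible (otherwise the left-hand side of (\ref{eq:m}) would exceed $1$ automatically), I can square both sides and, after the cancellation of $1+4x^2$, obtain the clean equivalence
\begin{equation*}
m\le 1 \iff x \ge y \iff \mu - c_0 \ge d + h \iff \mu - h \ge d + c_0 .
\end{equation*}
Equality holds in each step simultaneously, which recovers Corollary~\ref{corol1}, and the opposite strict inequality gives $m>1$.

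Next, I invoke Proposition~\ref{prop2}: the long-run growth rate of the economy equals $\max\{\mu-h,\, d+c_0\}$. The regime ``growth driven by incumbents'' is, by definition, the regime where $\mu-h \ge d+c_0$, so the first part of the proposition follows immediately from the equivalence above. For the second part, I argue as in the paragraph following Proposition~\ref{prop2}: if $d+c_0>\mu-h$ persisted, the growth of invested capital would exceed the return on investment of the economy indefinitely, which is unsustainable (a bubble regime). Hence the sustainability constraint is $d+c_0 \le \mu - h$, and the \emph{maximum} sustainable growth rate of investment in new firms is attained precisely at equality $d+c_0 = \mu - h$. Substituting this balance condition back into (\ref{eq:m}) (or equivalently applying Corollary~\ref{corol1}) yields $m=1$.

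The main obstacle is not really technical; the algebraic equivalence above is elementary once one keeps track of the sign of $1+2x$, and the rest consists in identifying ``balanced growth'' with the equality $\mu-h=d+c_0$ via the sustainability argument already articulated in the text. The only subtlety is to verify that the hypothesis $h\ge\max\{-d,0\}$ in Assumption~\ref{assumption:5}, together with $\sigma^2>0$, keeps the expression under the square root in (\ref{eq:m}) nonnegative and preserves the feasibility of the squaring step; this is a routine check.
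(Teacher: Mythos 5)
Your proof is correct and follows essentially the same route as the paper: you translate the incumbent-driven growth condition $\mu-h\ge d+c_0$ into $d+h\le \mu-c_0$ and read off $m\le 1$ from (\ref{eq:m}), then identify the balanced growth path with the equality case and invoke Corollary~\ref{corol1} to get $m=1$. The only difference is that you carry out the squaring argument explicitly (with the sign check made possible by $d+h\ge 0$ from Assumption~\ref{assumption:5}), where the paper appeals to the same monotonicity of (\ref{eq:m}) via its figure; this is a harmless elaboration, not a different approach.
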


Proposition~\ref{prop:main} shows that Zipf's law characterizes an efficient and sustainable allocation of resources among the firms of an economy. Any deviation from it is the signature of an inefficiency and/or an unsustainability of the allocation scheme. In this respect, the size distribution of firms is a diagnostic device to assess the efficiency and the sustainability of the allocation of resources among firms in an economy.

\begin{proof} According to the natural assumption that the growth of the economy is driven by the growth of incumbent firms, i.e. $\mu - h \ge d+c_0$, we get $d+h \le \mu - c_0$ and $\epsilon \le 1$ which leads to $m \le 1$ (see illustration on figure~\ref{FigEpsilon}); we have used assumption~\ref{assumption:5} according to which $d+h \ge 0$ hence $\mu - c_0 \ge 0$. On a balanced growth path, both investments in new firms and incumbent firms grow at the same rate $\mu-h = d + c_0$, hence the growth rate of the investment in new firms is maximum and by corollary~\ref{corol1} the tail index $m$ of the size distribution equals one.
\end{proof}
\begin{figure}
\centerline{\includegraphics[width=0.75\textwidth]{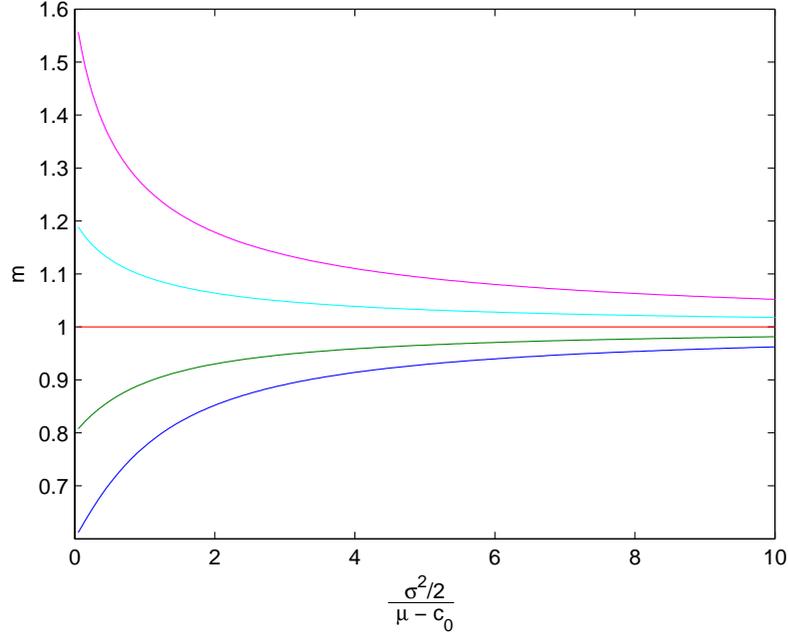}}
\caption{\label{FigEpsilon} The figure shows the exponent $m$ of the power law tail of the
distribution of firm sizes, given by (\ref{eq:m}), as
a function of $\frac{\sigma^2/2}{\mu-c_0}$, for different values of the ratio $\epsilon:=\frac{d+h}{\mu - c_0}$. Bottom to top $\epsilon = 0.6; 0.8; 1; 1.2; 1.6$.}
\end{figure}

In the present framework, the crucial parameters  $d, c_0, \mu$ and $h$ are exogenous. 
While this is beyond the scope of the present paper, we can however surmise that, within
an endogenous theory in which the growth of investments would be naturally correlated with the growth
of the firms in the economy because the success of firms generates the cash flow at the source
of new investments, the balance growth condition (\ref{ghtwbvgw}) appears 
almost unavoidable for a sustainable development. It is quite remarkable that
Zipf's law derives as the robust statistical translation of this balance growth condition.

\begin{remark} \label{rem:4}
Our theory suggests two simple explanations for the empirical evidence that the exponent $m$ is close to $1$. Either the investment in new firms is close to its maximum sustainable level so that the balance condition is approximately satisfied, or the volatility $\sigma$ of incumbent firms sizes is large. Indeed, according to equation~\eqref{eq:m}, the tail index $m$ goes to one as $\sigma$ goes to infinity irrespective of the values of the parameters $d, \mu, c_0$ and $h$. In fact, the larger the volatility, the larger the tolerance to the departure from the balance condition. Indeed, expanding relation \eqref{eq:m} for $\sigma$ large, we get
\be
m=1 - 2 \cdot \frac{\mu - h - c_0 - d}{\sigma^2} + 4 \cdot \frac{(d+h) \(\mu - h - c_0 - d\)}{\sigma^4} + O\(\frac{1}{\sigma^{6}}\),
\ee
and for small departures from the balance condition
\be
m=1 - \frac{2}{1+2\frac{d+h}{\sigma^2}} \cdot \frac{\mu -c_0 -h -d}{\sigma^2} + \frac{8 \frac{d+h}{\sigma^2}}{\(1+2\frac{d+h}{\sigma^2}\)^3} \cdot \(\frac{\mu -c_0 -h -d}{\sigma^2}\)^2+ O\(\(\mu -c_0 -h -d\)^3\).
\ee

When the volatility changes, the convergence of the size distribution toward its long-term distribution may be faster or slower. Indeed, according to Proposition~\ref{prop1}, the size distribution converges to a power law when the age of the economy is large compared with\linebreak $ \[ \(\mu - \frac{\sigma^2}{2} - c_0 \)^2 + 2 \sigma^2 (d+h)\]^{-1/2}$. This quantity is a decreasing function of the volatility if (and only if) $\frac{\sigma^2}{2} > \( \mu - c_0\) - 2 \(d+h \)$. Therefore, when the volatility is large Zipf's becomes more robust {\em and} the convergence towards Zipf's law is faster.
\end{remark}

\begin{remark} \label{rem:infmean}
The regime where $m \leq 1$,  which predicts an infinite mean
size, seems to violate the constraint that there is  a finite amount of capital
(or, employees) in the economy. This suggests that the associated parameter ranges
are just not possible in actual economies. Actually, the regime $m \leq 1$ is perfectly 
possible, as least in an intermediate asymptotic regime. Indeed, a real economy which grows
at a non-vanishing growth rate bounded by zero from below is finite only because
it has a finite age. As explained  in section \ref{sec:efi}, the distribution of firm sizes in such
a finitely lived economy (arguably representing the real world) is characterized by 
a power law regime with exponent $m$ as given by Proposition 1, crossing over to a
faster decay at very large firm sizes. The cross-over regime occurs for larger and larger
firm sizes as the age of the economy increases. There is thus no contradiction between
the finiteness of the amount of capital in the economy and the power law with exponent
$m<1$ up to an upper domain, so that the mean does exist. In other words, 
the paradox is resolved by correctly ordering the two limits: (i) limit of larger firm sizes ${\rm lim}_{S \to +\infty}$;
(ii) limit of large age of the economy ${\rm lim}_{\theta \to +\infty}$. The correct ordering for 
a finite and long-lived economy is ${\rm lim}_{\theta \to +\infty} {\rm lim}_{S \to +\infty}$, which means 
that, taking the limit of large firm sizes at fixed large but finite age $\theta$ leads to a finite mean,
coexisting with a power law intermediate asymptotic with exponent $m$ given by Proposition 1.
\end{remark}

\subsection{Calibration to empirical data}

According to \citeasnoun[table 2]{Dunne1}, the relative size of entrant firms to incumbent firms seems to have slightly declined during the period 1963-1982 in the US. According to our model, the ratio of the average size of entrant firms to the average size of incumbent firms is, for large enough time $t$,
\be
\label{eq:elku}
\frac{s_0 \cdot e^{c_0 \cdot t}}{ \Omega_t / N_t} \sim 
\begin{cases}
\frac{\mu-h-d-c_0}{d+h} \cdot e^{-\(\mu-h-d-c_0\) \cdot t},& \qquad \text{provided that~} \mu-h > d+c_0,\quad \quad(a) \\
\frac{1}{d+h} \cdot \frac{1}{t},& \qquad \text{provided that~} \mu-h = d+c_0, \quad \quad(b)\\
\frac{d+h-\mu+c_0}{d+h},& \qquad \text{provided that~} \mu-h < d+c_0,\quad \quad(c)\\
\end{cases}
\ee
where $\Omega_t$ is the average size of all incumbent firms (see Appendix~\ref{app2}) and $N_t$ is the average number of incumbent firms, at time $t$. The fact that \citeasnoun{Dunne1} observe a slight decay in the relative size of entrant firms to incumbent firms suggests that the condition of sustainable growth $\mu-h > d+c_0$ holds. Under this hypothesis, the calibration of equation (\ref{eq:elku}.a) by OLS gives, on an annual basis, 
\be
\mu-h-d-c_0 = 1.8\%~ (1.2\%) \qquad \text{and} \qquad \frac{\mu-h-d-c_0}{d+h} = 28\%~ (3\%).
\ee
The figures within parenthesis provide the standard deviations of the estimates. As a consequence, the alternative
hypothesis $\mu-h-d-c_0 \leq 0$ cannot be rejected at any usual significance level and we cannot affirm
that  Dunne's data corresponds to the regime $\mu-h-d-c_0 > 0$.

Under the second hypothesis $\mu-h=d-c_0$, equation (\ref{eq:elku}.b) leads to test the null hypothesis that the slope of the OLS regression of the logarithm of the size of entrant firms relative to the size of incumbent firms against the logarithm of time is equal to $-1$. Instead, we estimate a slope equal to $-0.086$	$(0.068)$, which is therefore not significantly different from zero. Thus, we reject the hypothesis $\mu-h=d-c_0$.

According to equation (\ref{eq:elku}.c), the size of entrant firms relative to the size of incumbent firms is constant over the period under consideration.
To formally test this hypothesis, we perform the OLS regression of the size of entrant firms versus to the size of incumbent firms against time. We find that the hypothesis of a time dependent ratio of the size of entrant firms relative to the size of incumbent firms is rejected at any usual significance level. We thus have to conclude that the third alternative actually holds and we get
\be
\frac{d+h-\mu+c_0}{d+h} = 25\%.
\ee
With the figures $d=0$ and $h=5\%$ obtained from \citeasnoun{Dunne1}, we obtain ${d+h-\mu+c_0} = 1.25\%$ and $\mu-c_0 = 3.75\%$. Thus, the balance condition is not strictly satisfied but the observed departure from the balance condition remains weak.

To sum up, reasonable estimates of the key parameters are $h = 4-6\%$, $d=\pm 0.5\%$, $\mu-c_0 = h \pm 2\%$. As for $\sigma$, \citeasnoun{Buldyrev97} report the standard deviations of the growth rates in terms of sales, assets, cost of goods sold and plant property and equipment for US publicly-traded companies. \citeasnoun{Buldyrev97} find that $\sigma$ ranges typically between $30\%$ to $50\%$. Based upon this set of figures, relation~\eqref{eq:m} leads to a tail index $m$ ranging between $0.7$ and $1.3$, in agreement with the range of values usually reported in the literature.

Proposition~\ref{prop1} states that the asymptotic power law of the distribution of firm sizes can be observed if the age of the economy is large compared with $\[ \(\mu - \frac{\sigma^2}{2} - c_0 \)^2 + 2 \sigma^2 (d+h)\]^{-1/2}$. With the set of parameters above, this corresponds to economies whose age is large compared to $5$ to $12$ years.

\section{Discussion}

\subsection{Comparison with Gabaix's model}

Corollary~\ref{corol1} seems reminiscent of the condition 
given by \citeasnoun{Gabaix99} in its basic model, which relies on the argument that, because they are all born at the same time, firms grow -- on average -- at the same rate as the overall economy. 
Consequently, when discounted by the global growth rate of the
economy, the average expected growth rate of the firms must be zero.
Applied to our framework, and focusing on the distribution of {\em discounted} firm sizes, 
this argument would lead to $\mu=h$, with $d=c_0=c_1=0$ in order to match Gabaix's assumptions.
\citeasnoun{Gabaix99}'s condition would thus seem to be equivalent to our
balance condition for Zipf's law describing the density of firms' sizes to hold. 

Actually, this reasoning is incorrect. Consider the case where $\mu>h$, such that the global economy grows at the average growth rate $r_{G}=\mu-h$ according to Proposition~\ref{prop2}. \citeasnoun{Gabaix99} proposed to measure the growth of a firm in the frame of the global economy. In this moving frame, the conditional average growth rate of the firm is $\mu'=\mu-r_{G}=h$, which indeed would suggest that the balance condition is {\it automatically} obeyed when $\mu$ is replaced by $\mu'$. But, one should notice that $\mu'$ is a transformed growth rate, and not the true rate. The average growth rate $r_{G}=\mu-h$ of the global economy is micro-founded on the contributions of all growing firms. 
It would be incorrect to insert $\mu'$ in the statements of Proposition 1, as $\mu'$ is the effective growth rate resulting from the change of frame, while our exact derivation requires the 
parameters $\mu$ and $h$ for Proposition 1 to hold.  As such, nothing in our model automatically
sets the growth rate $\mu$ of firms to their death rate $h$, contrarily to what happens in \citeasnoun{Gabaix99}'s
model. The main difference that invalidates the application of \citeasnoun{Gabaix99}'s argument is the stochastic flow of firm's births and deaths.

It is important to understand that in \citeasnoun{Gabaix99}'s basic model, the derivation of Zipf's law relies crucially on a model view of the economy in which {\em all firms are born at the same instant}. Our approach is thus
essentially different since it considers the flow of firm births, as well as their deaths, which is more in agreement with empirical evidence. Note also that  the available empirical evidence on Zipf's law is based on analyzing {\it cross-sectional} distributions of firm sizes, i.e., at specific times. As a consequence, the change to the global economic growth frame, argued by \citeasnoun{Gabaix99}, just amounts to multiplying  the value of each firm by the same constant of normalization, equal to the size of the economy at the time when the cross-section is measured. Obviously, this normalization does not change the exponent of the power law distribution of sizes, if it exists. Furthermore, elaborating on \citeasnoun{Krugman96}'s argument about the non-convergence of the distribution of firm sizes toward Zipf's law in \citeasnoun{Simon55}'s model, \citeasnoun{Blank_Solomon00} have shown that \citeasnoun{Gabaix99}'s argument suffers from a more technical problem. Based on the demonstration that the two limits, the number of firms $N \to \infty$ and $s_{\rm min}(t)/\Omega(t) \to 0$\footnote{The term $\Omega(t)$ refers to the average size of the economy defined by the sum of the sizes over the population of incumbent firms (see \eqref{eq:lkesu} in appendix~\ref{app2}).} (or equivalently the limit of large times $t \to \infty$) are non-commutative, \citeasnoun{Blank_Solomon00} showed that Zipf's exponent $m=1$ as obtained by \citeasnoun{Gabaix99}'s argument requires (i) taking the long time limit $s_{\rm min}(t)/\Omega(t) \to 0$ over which the economy made of a large but finite number $N$ firms grows without bounds, while simultaneously obeying the condition (ii) $N \gg \exp[\Omega(t) / s_{\rm min}(t)]$. The problem is that conditions (i) and (ii) are mutually exclusive. \citeasnoun{Blank_Solomon00} showed that this inconsistency can be resolved by allowing the number of firms to grow proportionally to the total size of the economy.

In a generalized approach of his basic model, Gabaix accounts for the appearance of new entities with a constant rate $\nu$  (equal to $d+h$ with our notations) and shows \cite[Proposition 3]{Gabaix99} that, as long as this birth rate is less than the growth rate  $\gamma$ of existing entities ($\mu$ or $\mu-h$ in our notations), the results of his basic model holds, i.e., Zipf's law holds. On the contrary, he shows that the tail index of the size distribution is equal to $m$ given by \eqref{eq:m} when the birth rate of new entities is larger than their growth rate. This result seems in contradiction with ours, as well as with \citeasnoun{Luttmer07}'s results, insofar as Proposition~\ref{prop1} states that $m$ is the tail index of the size distribution irrespective of the relative magnitude of the birth rate of entrant firms and of the growth rate of incumbent ones. The discrepancy between these two results comes from an error in  Gabaix's proof of Zipf's law in the regime when the birth rate of new entities is less than the growth rate of existing entities. The error consists in assuming that young firms do not contribute at all to the shape of the tail of the size distribution when $\nu$ is less than $\gamma$ 
\footnote{
To show that Zipf's law holds as long as the birth rate of new entities is less than the growth rate of existing entities, \citeasnoun[Appendix 2]{Gabaix99} splits the population of cities in two parts: the old ones, whose age is larger than $T=t/2$, and the young ones, whose age is smaller than $T=t/2$. In the limit of large time $t$, he shows that the size distribution of old cities should follow Zipf's law as a consequence of the results derived from his basic model. Then he provides the following majoration of the size distribution of firms born at time $\tau > t/2$, i.e., for young firms: $\Pr \[S > s | {\rm birthdate}= \tau\] \le \E[S |{\rm birthdate}=\tau]/s$. This trivial inequality requires the expectation $\E[S |{\rm birthdate}=\tau]$ be finite. Thus, Gabaix's derivation crucially relies on the fact that the firms whose ages are slightly larger than $t/2$ are old enough for Zip's law to hold (and thus for 
$\E[S |{\rm birthdate} < t/2]$ to be arbitrary large and infinite for an arbitrarily large economy which allows for the sampling
of the full distribution), while the firms whose ages are slightly less than $t/2$ are not old enough for Zipf's law to hold, and 
therefore they still admit a finite average size: $\E[S |{\rm birthdate} > t/2]< \infty$. It is clearly a contradiction 
as one cannot have simultaneously $\E[S |{\rm birthdate} = (t/2)^+] < \infty$ and Zipf's law for times $(t/2)^-$.
This invalidates eq. (17) in \citeasnoun{Gabaix99} because the integrand have to diverge as $\tau \to T$, with the notations of Gabaix article.}.
Therefore, in the presence of firm entries, Gabaix's approach does not allow to explain Zipf's law.

\subsection{Comparison with Luttmer's model}

Based upon structural models, an important modeling strategy has been developed, starting from \citeasnoun{Lucas1978} and evolving to the more recent \citeasnoun{Luttmer07,Luttmer08} or \citeasnoun{RHW2007a,RHW2007b} models. The distribution of firm sizes then appears as one of the properties of a general equilibrium model, which depends on different industry parameters. In these models, Zipf's law is obtained as a limit case, needing a rather sharp fine tuning of the control parameters. Rossi-Hansberg and Wright's model is a ``one firm'' model as in \citeasnoun{Gabaix99} and is therefore subjected to the same restrictions. We do not discuss further this model in light of the results of our reduced form model. 
In contrast, the assumptions underpinning  Luttmer's model match the assumptions under which proposition~\ref{prop1} holds, which motivates a closer comparison.

\citeasnoun{Luttmer07} considers an economy of firms with different ages.
For a firm of age $a$, its size  $S_a$ follows a geometric Brownian motion
\be
d \ln S_a = \mu \cdot d a + \sigma  \cdot d W_a
\ee
where the drift $\mu$ and volatility $\sigma$ are derived from a micro-economic model and are related to the price elasticity $\frac{\beta}{1-\beta}$ of the demand for commodity, to the rate $\theta_E$ at which the productivity of entering firms grows over time, to the trend $\theta_I$ of log productivity for incumbent firms, and to the volatility  $\sigma_Z$ of the productivity:
\be
\mu = \frac{\beta}{1 - \beta} \(\theta_I - \theta_E \), \qquad \sigma = \frac{\beta}{1 - \beta} \sigma_Z.
\ee
Due to the presence of fixed costs, incumbent firms exit when their size reaches a constant minimum size $b$ and in this case only. In our notations, this implies $c_1=0$ and $h=0$. In addition, Luttmer assumes that the overall number of incumbent firms grows at a rate $\eta > \mu + \sigma^2/2$ so that the size of a typical incumbent firm is not expected to grow faster than the population growth rate. Within our framework, the number of firms grows, on the long run, at the rate $d$, so that we have the correspondence $d=\eta$. Finally, Luttmer considers that firms enter either with a fixed size or with a size taken from the same distribution as the incumbent firms; consequently, in our notations, we have $c_0=0$. Then, by application of proposition~\ref{prop1}, we conclude, as in \citeasnoun[section III.B]{Luttmer07}, that the size distribution of firms follows a power law with a tail index given by
\be
m = - \frac{\mu}{\sigma^2} + \sqrt{\(\frac{\mu}{\sigma^2}\)^2 + 2 \frac{\eta}{\sigma^2}}~.
\ee
Notice that Luttmer only considers the long term distribution of firm sizes, while our result allows considering the transient regime which eventually leads to the power law. In particular, accounting for the transient regime avoids resorting to the assumption $\eta > \mu + \sigma^2/2$. Indeed, in Luttmer's model, this assumption ensures that the tail index $m$ remains larger than one so that, detrended by the overall growth of the number of firms given by $e^{+\eta t}$, the average firm size is finite. This is a natural requirement if the economy is assumed to be finite. This latter assumption is more questionable in economies of infinite duration. In contrast, when finite time effects are considered as in our framework, the average firm size is always finite for finite times, since the density of firm sizes decays faster than any power law beyond the 
intermediate asymptotic described by the power law, whether $\eta > \mu + \sigma^2/2$ or not. In other words,
the power law is truncated by a finite time effect, as derived in appendix~\ref{app:prop1}. As time increases, the
truncation recedes progressively to infinity, thus enlarging the domain of validity of the power law. The 
exact power law distribution is attained therefore for asymptotically large times (we come back to this point latter on in section~\ref{sec:efi}). Therefore, the constraint $\eta > \mu + \sigma^2/2$ is not necessary anymore
in our framework, since there is no reason for the average firm size to remain finite at infinite times when the size of the overall economy becomes itself infinite. 

The endogeneization of the growth rate of the productivity of entrant firms performed by Luttmer in the second part of his article does not match our assumptions, so that we cannot proceed further with the comparison of his results with ours. Indeed, in Luttmer's case, the upper tail of the distribution of entrant firms behaves as the tail of the distribution of incumbent firms, so that assumption~\ref{assumption:iii-a} is not satisfied.

\section{Miscellaneous results}

\subsection{Distribution of firms' age and declining hazard rate}

\citeasnoun{BPZ92}, \citeasnoun[and references therein]{Caves98} or \citeasnoun{Dunne1,Dunne2}, among others, have reported declining hazard rates with age. Under assumption~\ref{assumption:5}, the hazard rate is constant, which seems to be counterfactual. However, we now show that the presence of the lower barrier below which firms exit allows to account for age-dependent hazard rate. 

Let us denote by $\theta$ the age of a firm at time $t$, i.e., the firm was born at time $t-\theta$. Expression \eqref{eq:sdmmosit} in appendix~\ref{app:prop1} allows us to derive the probability that, at time $t$, a firm older that $\theta$ is still alive, which corresponds to the distribution of firm ages. Indeed denoting by $\tilde \Theta_t$ the random age of the considered firm at time $t$,
\be
\Pr \[ \tilde \Theta_t > \theta\] = \displaystyle \int_{s_{\min}(t)}^\infty
\frac{1}{s} \varphi\left[\ln\left(\frac{s}{s_{\min}(t)}\right);t,\theta\right] \, ds ,
\ee
where $\frac{1}{s} \varphi\left[\ln\left(\frac{s}{s_{\min}(t)}\right);t,\theta\right]$ is the size density of firms of age $\theta$ at time $t$. Some algebraic manipulations give
\bea
\Pr \[ \tilde \Theta_t > \theta\] &=& \displaystyle  \frac{1}{2} \[ {\rm erfc} \left(-\frac{\ln\rho(t) + (\delta -1-\delta_0)\tau}{ 2 \sqrt{\tau}} \right) \right. \\
&-& \left. \rho(t)^{1 - \delta + \delta_0} \cdot {\rm erfc} \left(\frac{\ln\rho(t) - (\delta -1-\delta_0)\tau}{ 2 \sqrt{\tau}} \right) \]~,
\eea
with $\tau:=\frac{\sigma^2}{2} \theta$, $\delta:=\frac{2 \mu}{\sigma^2}$ and $\delta_0:=\frac{2 c_0}{\sigma^2}$.

Accounting for the independence of the random exit of a firm with hazard rate $h$ from the size process of the firm
(assumption~\ref{assumption:5}), the ``total'' hazard rate reads
\bea
{\mathcal H}(t,\theta) &=& h - \frac{d \ln \Pr \[ \tilde \Theta_t > \theta\]}{d \theta},\\
&=&  h + \frac{\ln \(\frac{s_0(t)}{s_{\min}(t)}\) \cdot \(\frac{s_0(t)}{s_{\min}(t)}\)^{- \frac{1 - \delta + \delta_0}{2}} \cdot \exp \[- \frac{\ln^2 \(\frac{s_0(t)}{s_{\min}(t)}\) + \(1 - \delta + \delta_0\)^2 \tau^2}{4 \tau} \]}{{\rm erfc} \left(-\frac{\ln \frac{s_0(t)}{s_{\min}(t)} + (\delta -1-\delta_0)\tau}{ 2 \sqrt{\tau}} \right)  -  \(\frac{s_0(t)}{s_{\min}(t)}\)^{1 - \delta + \delta_0} \cdot {\rm erfc} \left(\frac{\ln \frac{s_0(t)}{s_{\min}(t)} - (\delta -1-\delta_0)\tau}{ 2 \sqrt{\tau}} \right)}, \label{theynbw}
\eea
assuming, for simplicity, that the random variable $\tilde s_0$ reduces to a degenerate random variable $s_0$. Expression (\ref{theynbw}) shows that the failure rate actually depends on firm's age. It also depends explicitly on the current time $t$ through the ratio $\frac{s_0(t)}{s_{\min}(t)}$.

Let us focus on the case $c_0=c_1$, which corresponds to the same growth rate for $s_0(t)$ and $s_1(t)$. This allows considering arbitrarily old firms since, according to \eqref{eq:birthdate}, the starting point of the economy can then be $t_* = - \infty$. We obtain the limit result 
\be
{\mathcal H}(t,\theta) \stackrel{\theta \to \infty}{\longrightarrow}
\begin{cases}
\displaystyle h,& \quad \mu -c_1 - \frac{\sigma^2}{2} > 0,\\
\displaystyle h+\frac{1}{2 \sigma^2}\(\mu -c_1 - \frac{\sigma^2}{2} \)^2,& \quad \mu -c_1 - \frac{\sigma^2}{2} \le  0~.
\end{cases}
\ee
In the moving frame of the exit barrier, $\mu -c_1 - \frac{\sigma^2}{2}$ is the drift of the log-size of a firm
\be
d \ln S(t) = \(\mu -c_1 - \frac{\sigma^2}{2}\) dt + \sigma dW(t).
\ee
Thus, when the drift is positive, the firm escapes from the exit barrier, i.e., its size grows almost surely to infinity, so that the firm can only exit as the consequence of the hazard rate $h$. On the contrary, when the drift is non-positive, the firm size decreases and reaches the exit barrier almost surely, so that the firm exits either because it reaches the exit barrier or because of the hazard rate $h$. Hence the result that the asymptotic total failure rate is the sum of the exogenous hazard rate $h$ and of the asymptotic endogenous hazard rate $\frac{1}{2 \sigma^2}\(\mu -c_1 - \frac{\sigma^2}{2} \)^2$ related to the failure of a firm when it reaches the minimum efficient size in the absence of $h$ \footnote{Mathematically speaking, this hazard rate can be derived form the generic formula that gives the probability that a Brownian motion $\left\{X_t \right\}_{t \ge 0}$ with negative drift, started from $X_0 >0$, crosses for the first time  the lower barrier $X=0$.}.

Differentiating the age-dependent hazard rate given by (\ref{theynbw})
with respect to $\theta$ and using the asymptotic expansion of the error function \cite{AB1965}, we get
\be
\partial_\theta {\cal H}(t, \theta)=
\begin{cases}
\displaystyle - \frac{1}{2 \sigma^2}\(\mu -c_1 - \frac{\sigma^2}{2} \)^2 \cdot {\cal H}(t, \theta) \cdot \[ 1 + O\(\frac{1}{\theta}\)\], \quad & \mu -c_1 - \frac{\sigma^2}{2} > 0,\\
\displaystyle - \frac{3 \sigma^2}{ \theta^2}\(\mu -c_1 - \frac{\sigma^2}{2} \)^{-2} {\cal H}(t, \theta) \cdot \[ 1 + O\(\frac{1}{\theta}\)\], \quad & \mu -c_1 - \frac{\sigma^2}{2} \le 0,\\
\end{cases}
\ee
which shows that the total failure rate decreases with age, at least for large enough age, in agreement with the literature.

\subsection{Deviations from Zipf's law due to the finite age of the economy}
\label{sec:efi}

Considering, for simplicity, that $\tilde s_0$ is a degenerate random variable such that $\Pr[\tilde s_0 = s_0]=1$, we can determine the deviations from the asymptotic power law tail of the mean density of firm sizes (given explicitly by \eqref{gstlimit} in appendix~\ref{app:prop1}) due to the finite age of the economy. For this, it is convenient to study the  $s$-dependence of the mean number of firms whose sizes exceeds a given level $s$:
\begin{equation}\label{Nintdef}
N(s,t) = \int_s^\infty g(s',t) ds'~ .
\end{equation}
Zipf's law corresponds to $N(s,t)\sim s^{-1}$ for large $s$.

All calculations done, defining $s_0(t):=s_0 e^{c_0 \cdot t}$ as being the initial size of an entrant firm at time $t$ when $\Pr[\tilde s_0 = s_0]=1$, we obtain the number $N(\kappa,\tau)$ of firms whose normalized size $\frac{s}{s_0(t)}$ is larger than $\kappa$ at the standardized age $\tau := \frac{\sigma^2}{2} \theta$,
\begin{equation}\label{numbsanothrep}
N(\kappa,\tau) = B_-\, \kappa^{-\varrho_-}+
B_+\, \kappa^{+\varrho_+} - C~ ,
\end{equation}
where
\begin{equation}\label{bmpdef}
\begin{array}{c} \displaystyle
B_- := {1 \over 2\alpha(\eta) \varrho_-} \Big[ \text{erfc}\left({\ln\kappa- \tau
\alpha(\eta) \over 2 \sqrt{\tau}} \right) - \(\frac{s_0(t)}{s_{\min}(t)}\)^{-\alpha(\eta)}
\text{erfc}\left({\ln(\kappa\rho^2)- \tau \alpha(\eta) \over 2 \sqrt{\tau}}
\right)\Big] ~ ,
\\[4mm] \displaystyle
B_+ := {1 \over 2\alpha(\eta) \varrho_+} \Big[ \text{erfc}\left({\ln\kappa+ \tau
\alpha(\eta) \over 2 \sqrt{\tau}} \right) - \(\frac{s_0(t)}{s_{\min}(t)}\)^{\alpha(\eta)}
\text{erfc}\left({\ln(\kappa\rho^2)+ \tau \alpha(\eta) \over 2 \sqrt{\tau}}
\right)\Big]~ ,
\\[4mm] \displaystyle
C:= {1 \over 2\eta} e^{-\eta \tau} \Big[\text{erfc}\left({\ln\kappa- \tau \alpha
\over 2 \sqrt{\tau}} \right)- \(\frac{s_0(t)}{s_{\min}(t)}\)^{-\alpha}
\text{erfc}\left({\ln(\kappa\rho^2)- \tau \alpha \over 2 \sqrt{\tau}} \right)
\Big] ~ ,
\end{array}
\end{equation}
and $\varrho_\pm := \frac{1}{2} \[ \alpha \pm \alpha\(\eta\)\]$, $\alpha:= 2\cdot \frac{\mu - c_0}{\sigma^2}- 1$, $\alpha(\eta) := \sqrt{\alpha^2 + 4 \eta}$, $\eta :=\frac{\sigma^2}{2}\,(d+h)$.

\begin{figure}
\centerline{\includegraphics[width=0.75\textwidth]{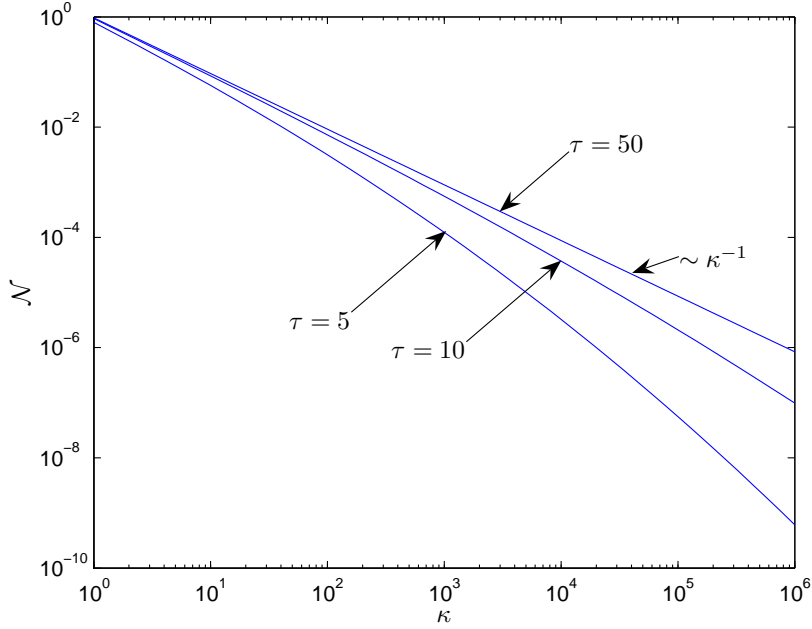}}
\caption{\label{FigDensity} The figure quantifies the deviations from Zipf's law resulting from the finite age of 
the economy, by showing the mean number $N(\kappa,\tau)$ of firms
of normalized size $s/s_0(t)$ larger than $\kappa$ as a function of  $\kappa$, for parameters
$\mu=c_0$, $d+h=0$ (satisfying the balance condition), and for $s_0=100 \cdot s_{\rm min}$,
$c_0=c_1=0$ and reduced times $\tau:= \sigma^2 \theta /2=5;10;50$. The exact asymptotic Zipf's law $\sim \kappa^{-1}$ is also shown for comparison.}
\end{figure}

Figure \ref{FigDensity} shows the mean cumulative number $N(\kappa, \tau)$ of firms as a function of the normalized firm size $\kappa$, for $\mu=c_0$ and $h=-d>0$ satisfying to the balance condition of corollary~\ref{corol1}, for $s_0=100 \cdot s_{\rm min}$, $c_0=c_1=0$ and reduced times $\tau=5$, $10$, $50$. As expected, the older the economy, the closer is the mean cumulative number $N(\kappa,\tau)$ to Zipf's law $N(\kappa, \infty)\sim \kappa^{-1}$. 
Beyond $\tau = 50$, there are no noticeable difference between the actual distribution of firm sizes and its asymptotic power law counterpart. This illustrates graphically the last point discussed in remark~\ref{rem:4} that, the larger the volatility (beyond some threshold), the faster the convergence of the size distribution toward the asymptotic power law. Indeed, the larger the volatility, the smaller the age $\theta$ necessary to reach a value of $\tau$ close to $50$.

The downward curvatures of the graphs for all finite $\tau$'s show that the apparent tail index can be empirically found larger than $1$ even if all conditions for the asymptotic validity of Zipf's law hold. This effect could provide an explanation for 
some dissenting views in the literature about Zipf's law. The two recent influential studies by \citeasnoun{CabralMata2003} and \citeasnoun{Eeckhout2004}\footnote{See the comment by \citeasnoun{Levy09} which suggests that the extreme tail of the size distribution is indeed a power law and the reply by \citeasnoun{Eeckhout09}.}
have suggested that the distribution of firm and of city sizes could be well-approached by the log-normal distribution, which exhibits a downward curvature in a double-logarithmic scale often used to qualify a power law. Our model shows that a slight downward curvature
can easily be explained by the partial convergence of the distribution of firm sizes toward the asymptotic Zipf's law due to the finite age of the economy. 

It is interesting to note that two opposing effects can combine to make the apparent exponent $m$ close to $1$ even when the balance condition does not hold exactly. Consider the situation where $\epsilon:={d+h \over \mu-c_0} < 1$. For $\epsilon <1$, figure \ref{FigEpsilon} shows that $m$ is always less than one. But, figure \ref{FigDensity} shows 
that the distribution of firm sizes for a finite economy is approximately a power law but with an exponent larger than one for the asymptotic regime of an infinitely old economy. It is possible that these two deviations may cancel out to a large degree, providing a nice apparent empirical Zipf's law.

\subsection{Representativeness of the mean-distribution of firm size}

All our results have been established for the average number $N(s,t)$ of firms whose size is larger than $s$, where
the average is performed over an ensemble of equivalent statistical realizations of the economy. Since
empirical data are usually sampled from a single economy, it is important to ascertain if the average
Zipf's law accurately describes the distribution of single typical economies. The answer to this question
is provided by the following proposition whose proof is given in appendix~\ref{app3}.
\begin{proposition}
\label{prop3}
Under assumptions \ref{assumption1} and \ref{assumption:iii-a}, the random number $\tilde N(s,t)$ of firms whose size is larger than $s$ in a given economy follows a Poisson law with parameter $N(s,t)$ (defined in (\ref{ghtgbtpgbqm}) with (\ref{thgdwesa}) and (\ref{dens c<0 expr})):
\be
\label{thidf;wldw}
\Pr \[\tilde N(s,t) = n\] = \frac{N(s,t)^n}{n!} e^{-N(s,t)}.
\ee
\end{proposition}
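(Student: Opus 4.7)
The plan is to exploit the fact that under Assumption~\ref{assumption1}, firm births form a Poisson point process on the time axis with intensity $\nu(t)=\nu_0 e^{dt}$, and that under Assumption~\ref{assumption:iii-a} together with the subsequent independence structure (independent initial sizes $\tilde s_0$, independent Brownian motions $W_i$ driving Gibrat's rule, and independent exponential clocks realizing the hazard rate $h$), each firm's entire life history is an independent mark attached to its birth time. The count $\tilde N(s,t)$ is then a functional of a marked Poisson process, and I would invoke the marking/thinning theorem.

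Concretely, first I would write
\[
\tilde N(s,t) \;=\; \sum_{i:\, t_i\le t} \mathbf{1}_{A_i(s,t)},
\]
where $A_i(s,t)$ is the event that the firm born at $t_i$ has not exited by time $t$ (neither via the lower barrier $s_{\min}(\cdot)$ nor via the hazard-rate clock) \emph{and} has current size $S_i(t) > s$. Crucially, given the birth time $t_i$, the indicator $\mathbf{1}_{A_i(s,t)}$ depends only on firm-$i$-specific randomness $(s_{0,i}, W_i, \text{exit clock}_i)$, which is independent across $i$ and independent of the birth Poisson process. Define
\[
p(u,t;s) \;:=\; \Pr\bigl[A_i(s,t)\,\big|\,t_i=u\bigr],
\]
which is just the probability a firm born at time $u$ is alive with size $>s$ at time $t$.

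Next I would apply the classical thinning/marking theorem for Poisson processes: if one marks each point $t_i$ of a Poisson process of intensity $\nu(u)$ with an independent Bernoulli$(p(u,t;s))$ keep/discard mark, the surviving points form a Poisson process with intensity $\nu(u)\,p(u,t;s)$. Hence $\tilde N(s,t)$, the total number of surviving points in $[0,t]$, is Poisson distributed with mean
\[
\Lambda(s,t) \;=\; \int_0^t \nu(u)\,p(u,t;s)\,du.
\]
To finish, I would identify this parameter with the quantity $N(s,t)$ used throughout the paper. But $\E[\tilde N(s,t)]$ is precisely the mean number of firms of size exceeding $s$ at time $t$, which is what $N(s,t)$ denotes (as built up in the proofs of Proposition~\ref{prop1} and expressed through the density $g(s',t)$ in~\eqref{Nintdef}). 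So $\Lambda(s,t)=N(s,t)$ and~\eqref{thidf;wldw} follows.

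The main obstacle is purely bookkeeping: one must verify that all the relevant sources of randomness attached to a given firm (its initial size draw from $\tilde s_0$, its Wiener path, its independent size-independent exit time from Assumption~\ref{assumption:5}) are independent across firms and independent of the birth Poisson process, so that the marking theorem genuinely applies. Once this independence is laid out cleanly, the rest reduces to invoking the theorem and matching the resulting mean with the definition of $N(s,t)$; there is no delicate limit or convergence argument to handle.
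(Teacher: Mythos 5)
Your proposal is correct and follows essentially the same route as the paper: the paper invokes the marked-Poisson result (lemma 6.4.VI of Daley and Vere-Jones) stating that a Poisson ground process with independent marks is a Poisson process on the product space, which is just the product-space formulation of the Bernoulli thinning/marking argument you use, and then identifies the Poisson parameter with $N(s,t)$ through the first-moment computation of Lemma~\ref{lemma1}, exactly as you do.
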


As a consequence of proposition~\ref{prop3}, we state
\begin{corollary}
\label{corol2}
Under the assumptions of proposition~\ref{prop3}, the variance of the average relative distance 
$\frac{\tilde N(s,t)}{ N(s,t)} -1$ between the number of firms in one realization and its statistical
average is given by
\be
\E \[\(\frac{\tilde N(s,t)}{ N(s,t)} -1 \)^2\] = \frac{1}{N(s,t)}.
\label{thnhwrtadfqf}
\ee
\end{corollary}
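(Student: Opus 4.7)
The plan is to leverage Proposition~\ref{prop3} directly: since $\tilde N(s,t)$ is Poisson with parameter $N(s,t)$, the claim is reduced to the standard fact that the variance of a Poisson variable equals its mean. Nothing beyond this is needed, so the proof is essentially a one-line calculation.

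First, I would observe that by Proposition~\ref{prop3},
\be
\E\[\tilde N(s,t)\] = N(s,t), \qquad \Var \tilde N(s,t) = N(s,t),
\ee
which are the well-known first two moments of the Poisson$(N(s,t))$ distribution (derivable either from the generating function or by a direct sum over $n$ against $\Pr[\tilde N(s,t)=n]$ in \eqref{thidf;wldw}). Since $N(s,t)$ is deterministic (it is an ensemble average, not a random quantity), it can be pulled out of the expectation as a constant.

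Next I would write
\be
\E\[\(\frac{\tilde N(s,t)}{N(s,t)} - 1 \)^2\]
= \frac{1}{N(s,t)^2} \E\[\(\tilde N(s,t) - N(s,t) \)^2\]
= \frac{\Var \tilde N(s,t)}{N(s,t)^2}
= \frac{N(s,t)}{N(s,t)^2} = \frac{1}{N(s,t)},
\ee
which is exactly \eqref{thnhwrtadfqf}. The only thing to check is that $N(s,t)>0$, so that dividing by it is meaningful; this holds as soon as the economy has produced at least one firm on average, which is true under the assumptions of Proposition~\ref{prop1} for the time regime considered.

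There is really no obstacle here: the content of the corollary is entirely carried by Proposition~\ref{prop3}, and the remainder is a trivial rescaling of the Poisson variance. The only mild care required is to emphasize that $N(s,t)$ is a deterministic normalization and not itself a random variable, so that the squared relative deviation indeed reduces to $\Var \tilde N(s,t)/N(s,t)^2$.
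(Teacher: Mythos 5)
Your proof is correct and follows exactly the same route as the paper: both reduce the claim to $\Var \tilde N(s,t)/N(s,t)^2$ and invoke Proposition~\ref{prop3} to identify $\Var \tilde N(s,t) = N(s,t)$ from the Poisson law. The extra remarks on $N(s,t)$ being deterministic and positive are fine but not needed beyond what the paper states.
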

\begin{proof}
The left hand side of the equation above is nothing but the variance of $\tilde N(s,t)$ divided by $N(s,t)^2$. Since, $\tilde N(s,t)$ follows a Poisson law, $\Var \tilde N(s,t) = N(s,t)$, hence the result.
\end{proof}

To give a quantitative illustration, let us consider firms
whose sizes evolve according to the pure Geometric
Brownian Motion, i.e., $\Pr[\tilde s_0 = s_0]=1$, $c_0=c_1=h=d=0$ and no minimum exit size. Then,
$N(s,t) =  N(s) = \int_s^\infty g(s') d s'$, where
\begin{equation}
g(s) = \frac{\nu_0}{ \left|\mu - \frac{\sigma^2}{2} \right|}\, s_0^{1-\frac{2\mu}{\sigma^2}}\, s^{\frac{2\mu}{\sigma^2}-2} ~ ,
\qquad s>s_0~ , \qquad \mu < \frac{\sigma^2}{2}~ .
\end{equation}
This expression derives from the general expression (\ref{gstlimit}) for Zipf's law given in appendix~\ref{app:prop1}
in the limit $s_{\rm min} \to 0$. This leads to
\begin{equation}
\label{eqo_sdfgwe}
N(s) = N_0 \, \left({s_0 \over s}\right)^{1-\frac{2\mu}{\sigma^2}}~,
\end{equation}
where
\begin{equation}
\label{eqp_sdfgwe}
N_0 = \int_{s_0}^\infty g(s) ds = \frac{\sigma^2}{2} \cdot \frac{\nu_0}{(\mu - \frac{\sigma^2}{2})^2}  
\end{equation}
is the mean number of firms, whose sizes, at a given time $t$, are larger than
the initial size $s_0$.
Using (\ref{thnhwrtadfqf}) for the variance of the relative distance between the number 
of firms in one realization and its statistical average in
Corollary 2, and with (\ref{eqo_sdfgwe}),  we obtain that the variance of the 
relative distance is given by
$ {1 \over N_0} ~ {s \over s_0}$, where we have assumed that $\mu=0$, so that Zipf's law $N(s)\sim
s^{-1}$ holds for the mean distribution of firm sizes.

In this illustrative example, the total number of firms is infinite while 
$N_0$ remains finite. Let us consider a data set spanning 
the range $s \in (s_0,s_*)$ where $s_* = 0.01\, N_0\, s_0$ is such that
the variance of the relative distance between the number of firms in one realization and its statistical
average remains smaller than $10^{-2}$
over the range $s \in (s_0,s_*)$. Suppose that 
the mean number of
firms in the economy, whose sizes are larger than $s_0$, is equal to $N_0=10^6$. Then
$s_* = 10^4\, s_0$, showing that Zipf's law should be observed, in a single realization
of an economy, with good accuracy over four orders of magnitudes in this example. Figure~\ref{FigSimulation} depicts ten simulation results obtained for such an economy.

\begin{figure}
\centerline{\includegraphics[width=0.75\textwidth]{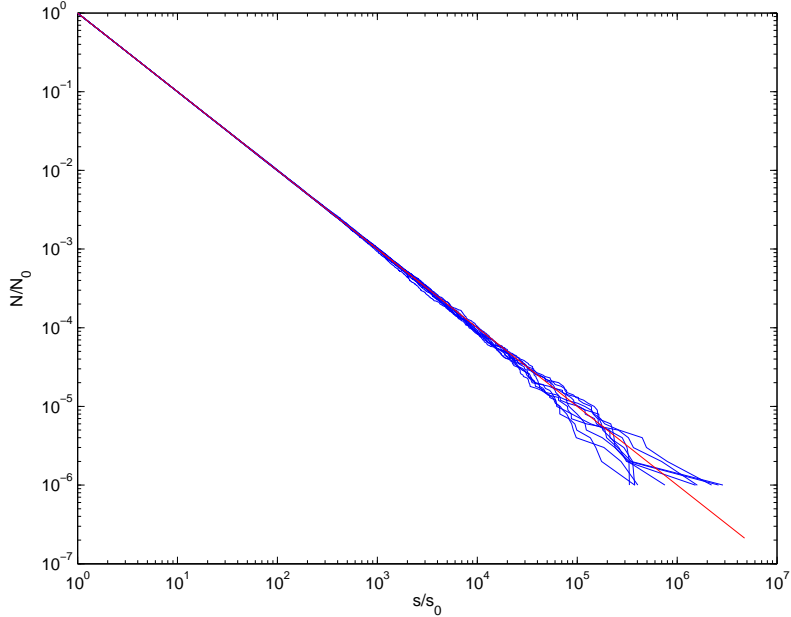}}
\caption{\label{FigSimulation} Number of firms whose size is larger than $s$ when $\sigma=0.01$, $\nu_0=50$ and $\mu=0$ for ten realizations of the economy. The straight red line depicts Zipf's law for the {\em mean} number of firms.}
\end{figure}

\section{Conclusion}

We have presented a general theoretical derivation of Zipf's law, which states 
that, for  most countries, the size distribution of firms is a power law with a specific exponent
equal to $1$: the number of firms with size greater than $S$ is inversely proportional to $S$. 
Our framework has taken into account
time-varying firm creation, firms' exit resulting from both a lack of sufficient size and sudden
external shocks, and Gibrat's law of proportional growth. We have identified
that four key parameters control the tail index $m$ of the power law distribution of firms sizes:
the expected growth rate $\mu$ of incumbent firms,  the hazard rate $h$
of random exits of firms of any size, the growth rate $c_0$ of the size of entrant firms, 
and the growth rate $d$ of the number of new firms. We have identified
that Zipf's law holds exactly when a balance condition holds, namely when
the growth rate $d+c_0$ of investments in new entrant firms is equal to the
average growth rate $\mu-h$ of incumbent firms. Thus, Zipf's law can be
interpreted as a remarkable statistical signature of the long-term optimal
allocation of resources that ensures the maximum sustainable growth rate
of an economy. We have also found that Zipf's
law is recovered approximately when the volatility of the growth rate of individual firms
becomes very large, even when the balance condition does not hold exactly.
We have studied the deviations from Zipf's law due to the finite age of the economy
and shown that a deviation of the balance condition $d+c_0=\mu-h$ can be
compensated approximately by the effect of the finite age of the economy
to give again an approximate Zipf's law. We have also shown that 
the presence of a minimum size below which firms exit 
allows us to account for the age-dependent hazard rate documented
in the empirical literature. Our results hold not only for statistical averages
over ensemble of economies (i.e., in expectations) but also apply to a single typical economy,
as the variance of the relative difference between the number of firms in one realization
and its statistical average decays as the inverse of the number of firms and thus 
goes to zero very fast for sufficiently large economies.
Therefore, our results can be compared with empirical data which are usually sampled
for a single economy. Our theory improves significantly on previous works 
by getting rid of many constraints and conditions that are found unnecessary or
artificial, when taking into account the proper interplay between birth, death and growth.

\appendix
\section{Appendix}

\subsection{Derivation of the distribution of firms' sizes: proof of proposition~\ref{prop1} \label{app:prop1}}

Consider an economy with many firms born at random times $t_i \ge t_0$, $i \in \mathbb N$, where $t_0$ is the starting time of the economy. We assume that no two firms are born at the same time so that the random sequence $\{ t_i \}_{i \in \mathbb N}$ defines a {\em simple  point process} \cite[def. 3.3.II]{DVJ2007}.

Let $S_i(t)$, $i \in \mathbb N$, $t \ge t_0$ be a positive real-valued stochastic process representing the size, at time $t$, of the firm born at $t_i$. Obviously, $S_i(t)=0$, $\forall t < t_i$. The sequence $\{t_i, S_i(t)\}_{i \in \mathbb N}$ defines a {\em simple marked point process} \cite[def. 6.4.I - 6.4.II]{DVJ2007} with ground process $\{ t_i \}_{i \in \mathbb N}$ and marks $\{S_i(t)\}_{i \in \mathbb N}$. 
We assume that $\{ t_i \}$ and $\{S_i(t)\}$ are mutually independent and such that the distribution of
$S_i(t)$ depends only on the corresponding location in time $t_i$. Consequently, the {\em mark kernel} $F_{m,i}(s,t):= \Pr \[S_i(t) < s \]$ simplifies to $F_m\(s,t | t_i\)$.

For any subset $T \times \Sigma$ of $[t_0, \infty) \times {\mathbb R}_+$, we introduce the {\em counting measure}
\bea
N_t\(T \times \Sigma\) &:=& \# \left\{t_i \in T,~ S_i(t) \in \Sigma \right\},\\
&=&  \sum_{i \in \mathbb N:~ t_i \in T} 1_{S_i(t) \in \Sigma}~ .
\eea
The total number of firms whose sizes are larger than $s$ at time $t$ then reads
\bea
\tilde N(s,t) &:=& N_t \( [t_0,t) \times [s, \infty) \),\\
&=& \int_{[t_0,t) \times [s, \infty)} N_t(du \times ds),\\
&=&  \sum_{i \in \mathbb N:~ t_i \le t} 1_{S_i(t) \ge s}~ .
\eea
As a consequence of theorem 6.4.IV.c in \citeasnoun{DVJ2007} we can state that
\begin{lemma}
\label{lemma1}
Provided that the ground process $\{t_i\}_{i \in \mathbb N}$ admits a first order moment measure with density $\nu(t)$ w.r.t Lebesgue measure, the counting process $\tilde N(s,t)$ admits a first moment 
\bea
\label{ghtgbtpgbqm}
N(s,t)&:=& E\[\tilde N(s,t)\],\\
&=& \int_{t_0}^t \[1-F_m\(s,t | u\)\] \cdot \nu(u)\, du.
\eea
\end{lemma}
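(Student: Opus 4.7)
The plan is to recognize Lemma~1 as a direct application of Campbell's theorem for marked point processes, specialized to the case where the mark of each point is the indicator of the event $\{S_i(t) \ge s\}$. Since the authors explicitly cite Theorem 6.4.IV.c in Daley and Vere-Jones, the proof is essentially a matter of translating our counting functional into the form covered by that theorem and verifying its hypotheses.

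First, I would write out the random variable of interest as
\[
\tilde N(s,t) \;=\; \sum_{i\in\mathbb N,\, t_i\le t} \mathbf{1}_{\{S_i(t)\ge s\}},
\]
and take expectations by conditioning on the ground process $\{t_i\}$. The assumed mutual independence between $\{t_i\}$ and $\{S_i(t)\}$, combined with the fact that the distribution of $S_i(t)$ depends on $t_i$ only through the mark kernel, gives
\[
\mathrm{E}\bigl[\mathbf{1}_{\{S_i(t)\ge s\}} \,\bigm|\, \{t_j\}_{j\in\mathbb N}\bigr] \;=\; 1-F_m(s,t\mid t_i).
\]
Pulling this conditional expectation inside the sum and taking the outer expectation yields
\[
N(s,t) \;=\; \mathrm{E}\!\left[\sum_{i\in\mathbb N,\, t_i\le t} \bigl(1-F_m(s,t\mid t_i)\bigr)\right].
\]

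Second, I would invoke Campbell's formula: for any nonnegative measurable function $f$ and a simple point process on $[t_0,\infty)$ whose first moment measure admits density $\nu$ with respect to Lebesgue measure,
\[
\mathrm{E}\!\left[\sum_{i\in\mathbb N,\, t_i\le t} f(t_i)\right] \;=\; \int_{t_0}^{t} f(u)\,\nu(u)\,du.
\]
Applying this with $f(u) = 1-F_m(s,t\mid u)$, which is measurable and bounded in $[0,1]$ hence integrable against $\nu$ on $[t_0,t]$ as long as $\int_{t_0}^t\nu(u)\,du<\infty$, immediately delivers the claimed identity
\[
N(s,t)\;=\; \int_{t_0}^{t}\bigl[1-F_m(s,t\mid u)\bigr]\,\nu(u)\,du.
\]

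The only subtle point, and therefore the step I would pay most attention to, is the justification of the interchange of expectation and the (possibly infinite) sum over $i$, together with the use of the product structure of the marked point process. This is where Theorem 6.4.IV.c of Daley and Vere-Jones is really doing the work: it guarantees that a marked point process whose ground process has an intensity and whose marks are governed by an independent kernel $F_m(\cdot\mid u)$ admits a first moment measure that factorizes as $\nu(u)\,du \otimes F_m(ds\mid u)$. Once this factorization is in place, Fubini's theorem applied to the nonnegative integrand $\mathbf{1}_{\{t_i\le t, S_i(t)\ge s\}}$ yields the stated formula with no further integrability issue beyond the standing assumption that $\nu$ is locally integrable.
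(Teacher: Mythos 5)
Your proposal is correct and follows essentially the same route as the paper: both rest on the factorization of the first-moment measure of the marked point process, $M_1(du\times ds)=\nu(u)\,du\cdot F_m(ds,t\mid u)$, guaranteed by Theorem 6.4.IV.c of Daley and Vere-Jones, and then integrate over $[t_0,t)\times[s,\infty)$ (your conditioning-then-Campbell presentation is just a more explicit rendering of that same step).
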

\begin{remark}
When the ground process is an (inhomogeneous) Poisson process, $\nu(t)$ is nothing but the intensity of the process.
\end{remark}
\begin{proof}
By theorem 6.4.IV.c in \citeasnoun{DVJ2007}, the first-moment measure $M_1 (\cdot) := \E \[N_t(\cdot)\]$ of the marked point process $\{t_i, S_i(t)\}_{i \in \mathbb N}$ exists since the corresponding moment measure exists for the ground process $\{t_i\}_{i \in \mathbb N}$. It reads
\be
M_1 \( du \times ds\) = \nu(u) du \cdot F_m(ds,t | u).
\ee
As a consequence
\bea
N(s,t) &=& \E \[\int_{[t_0,t) \times [s, \infty)} N_t(du \times ds) \]= \int_{[t_0,t) \times [s, \infty)} M_1 \( du \times ds\),\\
&=& \int_{[t_0,t) \times [s, \infty)} \nu(u) du \cdot F_m(ds,t | u) = \int_{t_0}^t \[1-F_m\(s,t | u\)\] \cdot \nu(u)\, du.
\eea
\end{proof}
As an immediate consequence, provided that $S(t)$ admits a density $f_m(s,t|u)$ with respect to Lebesgue measure, the counting process $\tilde N(s,t)$ admits a first-moment density
\be
g(s,t):= \int_{t_0}^t f_m\(s,t | u\) \cdot \nu(u)\, du.
\label{thgdwesa}
\ee
This first-moment density does not sum up to one but to a value $N_\infty(t) =
\lim_{s \to 0} N(s,t)$, which remains finite for all finite $t$. A sufficient condition is that
the growths of the number of firms and of their sizes are not faster than 
exponential in time, in agreement with condition ({\it \i\i}) in proposition~\ref{prop1}.
Many faster-than-exponential growth processes of the number of firms
and of their sizes are also permitted, as long as they do not lead to finite-time
singularities.

\begin{lemma}
\label{lemma2}
Under the assumptions \ref{assumption1}, \ref{assumption:iii-a} and \ref{assumption:5}, the first-moment density of sizes of all the firms existing at the current time $t$ reads
\begin{equation}\label{meandensgeneral}
g(s,t) = \int_{t_0}^t \nu(u) e^{-h \cdot(t-u)} f(s,t|u) du~ , \qquad t>t_0~ ,
\end{equation}
where $t_0 (> t_*)$ is the starting time of the economy (with $t_*$ given by (\ref{eq:birthdate})) and $f(s,t|u)$ is the probability density function of a firm's size at time $t$ and born at time $u$.
\end{lemma}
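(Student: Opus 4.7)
The plan is to apply Lemma~\ref{lemma1} to the marked point process of \emph{surviving} firms and to exploit the independence of the two exit mechanisms (the size-independent hazard rate $h$ and the absorption at the lower barrier $s_{\min}(t)$) together with their independence from the birth process. By Assumption~\ref{assumption1}, the ground process $\{t_i\}$ is Poisson with intensity $\nu(u)=\nu_{0}e^{d\,u}$, so the hypothesis of Lemma~\ref{lemma1} is met and we may write
\begin{equation*}
g(s,t)=\int_{t_0}^{t} f_m(s,t\mid u)\,\nu(u)\,du,
\end{equation*}
where $f_m(s,t\mid u)$ denotes the density at $s$ of the mark (current size) of a firm born at time $u$ and still operating at time $t$.

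Next I would decompose the mark density by splitting survival into two factors. Let $\tau_{\mathrm{haz}}$ be the (first) exogenous exit time and $\tau_{\min}$ be the first passage time at the lower barrier. By Assumption~\ref{assumption:5}, $\tau_{\mathrm{haz}}-u$ is exponentially distributed with rate $h$ and, crucially, is independent of the entire size process $\{S_i(\cdot)\}$; by Assumption~\ref{assumption:iii-a}, the initial size and the Wiener process driving \eqref{jhojgfwv} are also independent of the birth dates. Hence, conditional on $t_i=u$, the event ``firm is alive at $t$'' factors as $\{\tau_{\mathrm{haz}}>t\}\cap\{\tau_{\min}>t\}$ with the two events independent, and on this joint event the size at $t$ has density $f(s,t\mid u)$ (the defective density of a GBM started at $s_0^{i}$ at time $u$, killed at $s_{\min}(\cdot)$). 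Consequently
\begin{equation*}
f_m(s,t\mid u)=\Pr[\tau_{\mathrm{haz}}>t\mid t_i=u]\cdot f(s,t\mid u)=e^{-h(t-u)}\,f(s,t\mid u).
\end{equation*}

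Substituting this identity into the expression from Lemma~\ref{lemma1} yields exactly \eqref{meandensgeneral}. The constraint $t_0>t_*$ is used only to guarantee that firms born at the earliest admissible times enter above the lower barrier, so that $f(s,t\mid u)$ is a bona fide (sub-)probability density for every $u\in[t_0,t]$. No genuine obstacle is expected: the whole argument is bookkeeping, and the only subtlety worth stating explicitly is the mutual independence of the two killing mechanisms and of both of them from the Poisson birth flow, which is what licenses the multiplicative decomposition of the mark kernel.
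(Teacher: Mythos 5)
Your proposal is correct and follows essentially the same route as the paper: apply Lemma~\ref{lemma1} to get $g(s,t)=\int_{t_0}^{t} f_m(s,t\mid u)\,\nu(u)\,du$, then use the independence of the size-independent hazard $h$ (Assumption~\ref{assumption:5}) from the size process to factor the mark kernel as $f_m(s,t\mid u)=e^{-h(t-u)}f(s,t\mid u)$. The paper's proof is just a terser version of this same bookkeeping, with the barrier absorption likewise deferred to the definition of $f(s,t\mid u)$ and handled explicitly only in Lemma~\ref{lemma3}.
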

\begin{proof}
Assumptions \ref{assumption1} and \ref{assumption:iii-a} are enough for lemma~\ref{lemma1} to hold. Besides, by assumption \ref{assumption:5}, the exit rate of a firm is independent from its size so that $f_m(s,t|u) = e^{-h(t-u)} \cdot f(s,t|u)$, where $f(s,t|u)$ denotes the probability density function of a firm's size at time $t$ and born at time $u$.
\end{proof}

Lemmas~\ref{lemma1} and~\ref{lemma2} show that, in order to derive proposition~\ref{prop1}, we just need to consider the law
of a single firm's size, given that it has not yet crossed the level $s_{\min}(t)$. 
The density of a single firm's size, that is solution to equation (\ref{jhojgfwv}) embodying Gibrat's law, for a firm born at time $t_i=t-\theta_i$ and given the condition that the firm's size $S_i(t,\theta_i)$ is larger than $ s_{\min}(t),~ \forall \theta_i  \ge 0$, is given by the following result.
\begin{lemma}
\label{lemma3}
Under the assumptions \ref{assumption:iii-a}, \ref{assumption:Gibrat} and \ref{assumption:c1}, the probability density function $f\(s,t|t-\theta,\tilde s_0=s_0\)$ of a firm's size at time $t$ and aged $\theta$ conditional on $\tilde s_0=s_0$, 
taking into account the condition that the firm would die 
if its size would reach the exit level $s_{\min}(t)$, is
\begin{equation}\label{fphirelpartic}
\begin{array}{c} \displaystyle
f\(s,t|t-\theta,\tilde s_0 = s_0\) = \frac{1}{2\sqrt{ \pi\tau} s} \[ \exp\left(-\frac{1}{4 \tau} \(\ln\left(\frac{s}{s_{\min}(t)}\right)-
\ln \(\frac{s_0(t)}{s_{\min}(t)}\) - (\delta-1-\delta_0)\tau \)^2 \right)- \right.\\[4mm]
\displaystyle \left. \(\frac{s_0(t)}{s_{\min}(t)}\)^{- \(\delta-1 - \delta_0\)} 
\left(\frac{s}{s_{\min}(t)}\right)^{\delta_0 -\delta_1}
\exp\left(-\frac{1}{4 \tau} \(\ln\left(\frac{s}{s_{\min}(t)}\right) + 
\ln \(\frac{s_0(t)}{s_{\min}(t)}\) - (\delta-1-\delta_0)\tau \)^2 \right)\]~ ,
\end{array}
\end{equation}
where
\begin{equation}\label{lambzero}
s_0(t):=s_0 e^{c_0 \cdot t}~, \qquad  \tau := \frac{\sigma^2}{2}\, \theta~ ,\qquad 
\delta :={2 \mu \over \sigma^2}~,  \qquad 
\delta_0 :={2 c_0 \over \sigma^2}~,  \qquad \delta_1 :={2 c_1 \over \sigma^2}~.
\end{equation}
\end{lemma}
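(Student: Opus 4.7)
The plan is to reduce this to the classical transition density of a Brownian motion with constant drift absorbed at the origin, via a logarithmic change of variable that flattens the exponentially growing exit barrier. First, I would define $Y(t') := \ln(S(t')/s_{\min}(t')) = \ln S(t') - \ln s_1 - c_1 t'$ for $t' \in [t-\theta, t]$. Since $s_{\min}(t')$ is deterministic, Ito's formula applied to (\ref{jhojgfwv}) yields $dY(t') = \nu\, dt' + \sigma\, dW(t')$ with constant drift $\nu := \mu - \sigma^2/2 - c_1$. The moving absorbing level $S = s_{\min}$ now becomes the static barrier $Y = 0$, with initial value $y_0 = \ln(s_0/s_1) + (c_0 - c_1)(t-\theta) = \ln(s_0(t)/s_{\min}(t)) - (\delta_0-\delta_1)\tau$, using the paper's shorthand $\tau = \sigma^2\theta/2$, $\delta_k = 2c_k/\sigma^2$. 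The hypotheses $\underline{s}_0 > s_1$, $c_0 \ge c_1$, and $t_0 > t_*$ from (\ref{eq:birthdate}) guarantee $y_0 > 0$, as the image method requires.

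Second, I would invoke the classical formula for the transition density of a Brownian motion with drift $\nu$ and variance $\sigma^2$ started at $y_0 > 0$ and absorbed at zero,
\begin{equation*}
p(y,\theta\mid y_0) = \frac{1}{\sigma\sqrt{2\pi\theta}}\left[\exp\!\left(-\frac{(y-y_0-\nu\theta)^2}{2\sigma^2\theta}\right) - \exp\!\left(-\frac{2\nu y_0}{\sigma^2}\right)\exp\!\left(-\frac{(y+y_0-\nu\theta)^2}{2\sigma^2\theta}\right)\right] ,
\end{equation*}
derivable from the reflection principle for the driftless process combined with a Girsanov tilt by $\exp(\nu(Y_\theta - y_0)/\sigma^2 - \nu^2\theta/(2\sigma^2))$. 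The density of $S(t)$ is then $f(s,t\mid t-\theta, s_0) = p(y,\theta\mid y_0)/s$ with $y = \ln(s/s_{\min}(t))$, the $1/s$ coming from the Jacobian of $s \mapsto y$.

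Third, I would pass to the dimensionless variables. Using $\nu\theta = (\delta-1-\delta_1)\tau$ and $(c_0 - c_1)\theta = (\delta_0 - \delta_1)\tau$, the quantity $y - y_0 - \nu\theta$ collapses to $\ln(s/s_{\min}(t)) - \ln(s_0(t)/s_{\min}(t)) - (\delta-1-\delta_0)\tau$, producing the first exponential in (\ref{fphirelpartic}) immediately. For the image term, setting $A := \ln(s/s_{\min}(t))$ and $B := \ln(s_0(t)/s_{\min}(t))$, the argument is $A + B - (\delta-1-\delta_0)\tau - 2(\delta_0-\delta_1)\tau$, i.e., it differs from the argument displayed in (\ref{fphirelpartic}) by a shift of $-2(\delta_0-\delta_1)\tau$. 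Expanding the square and pulling the cross-term out of the Gaussian produces a prefactor $(s/s_{\min}(t))^{\delta_0-\delta_1}(s_0(t)/s_{\min}(t))^{\delta_0-\delta_1}$ together with a pure-$\tau$ exponential. Simultaneously, $e^{-2\nu y_0/\sigma^2}$ factors as $(s_0(t)/s_{\min}(t))^{-(\delta-1-\delta_1)}$ times another pure-$\tau$ exponential, and a direct check shows that the two $\tau$-only exponentials cancel identically. Collecting the surviving power-law prefactor gives $(s_0(t)/s_{\min}(t))^{-(\delta-1-\delta_0)}(s/s_{\min}(t))^{\delta_0-\delta_1}$, exactly as in (\ref{fphirelpartic}).

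The main obstacle is this last algebraic reorganization: three different time-scales (the current time $t$, the birth time $t-\theta$, and the barrier-growth rate $c_1$) enter $y_0$ and $\nu\theta$, and the clean form of (\ref{fphirelpartic}) rests on the miracle that the Girsanov prefactor $e^{-2\nu y_0/\sigma^2}$ and the shift in the image-term argument conspire to erase every birth-time quantity, leaving only the current-time ratio $s_0(t)/s_{\min}(t)$. Once this bookkeeping is verified, the lemma follows with no further work.
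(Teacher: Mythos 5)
Your proposal is correct and follows essentially the same route as the paper: the paper likewise works with $\mathcal{Z}=\ln\bigl(S/s_{\min}\bigr)$, which turns the moving exit level into a fixed absorbing barrier at zero for a Brownian motion with drift $\mu-\tfrac{\sigma^2}{2}-c_1$, quotes the standard image-method density (your reflection-plus-Girsanov formula), and then performs exactly your final bookkeeping step, using $\rho(u)=\rho(t)e^{(\delta_1-\delta_0)\tau}$ to trade the birth-time ratio for the current-time one, with the pure-$\tau$ exponentials cancelling and leaving the prefactor $\rho(t)^{-(\delta-1-\delta_0)}\bigl(s/s_{\min}(t)\bigr)^{\delta_0-\delta_1}$. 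The only cosmetic difference is that the paper presents the absorbed-diffusion density as the solution of the forward PDE with absorbing boundary condition (citing a textbook) rather than deriving it via the reflection principle and a measure change.
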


\begin{proof}
Let us consider a firm born at time $u=t-\theta$, where $t$ denotes the current time and $\theta \ge 0$ is the age of the firm. The firm's size $S(\theta, u)$ is given by the following stochastic process
\begin{equation}\label{gbmS}
S(\theta,u)= s_0(u) e^{c \cdot \theta + \sigma W(\theta)}~,
\end{equation}
where $\theta=t-u$,
$W(\theta)$ is a standard Wiener process, while $s_0(u)$ is the initial size of the firm, given $\tilde s_0 = s_0$, and $c:= \mu-\frac{\sigma^2}{2}$. The process (\ref{gbmS}) with the initial and boundary conditions in assumptions \ref{assumption:iii-a} and \ref{assumption:c1} can be reformulated as
\begin{equation}
\label{sthrusone}
S(\theta, u) = s_{\min}(u+\theta) e^{\mathcal{Z}(\theta,u)}~ ,
\end{equation}
where
\begin{equation}\label{mathzdef}
\mathcal{Z}(\theta,u) = \ln \rho(u+\theta) +(c-c_1) \theta + \sigma W(\theta), \qquad \rho(t) := \frac{s_0(t)}{s_{\min}(t)} ~.
\end{equation}
As a consequence,
\begin{equation} \label{eq:sdmmosit}
f\(s,t|u,\tilde s_0 = s_0\) = \frac{1}{s} \varphi\left[\ln\left(\frac{s}{s_{\min}(t)}\right),\theta;u\right] ~,
\end{equation}
where $\varphi(z ;\theta, u)$ denotes the density of $\mathcal{Z}(\theta,u)$ which is solution to 
\begin{equation}\label{difeqz}
\begin{array}{c} \displaystyle
{\partial \varphi(z;\theta,u) \over \partial \theta} + (c-c_1) {\partial
\varphi(z;\theta,u) \over
\partial z} = {\sigma^2 \over 2} {\partial^2 \varphi(z;\theta, u) \over
\partial z^2}~ , \\[4mm]
\varphi(z;\theta=0, u)= \delta(z-\ln\rho(u))~ , \\[3mm]
\varphi(z=0;\theta, u) =0~ , \qquad \theta>0~ .
\end{array}
\end{equation}
These initial and boundary conditions are equivalent to the initial and boundary conditions in assumptions \ref{assumption:iii-a} and \ref{assumption:c1}.
Using any textbook on stochastic processes \cite[for instance]{Redner01}, we get
\begin{equation}\label{solpsievent}
\begin{array}{c} \displaystyle
\varphi(z;\theta, u) = {1 \over 2\sqrt{ \pi\tau}}  \exp\left(-{(z-
\ln\rho(u)- (\delta -1-\delta_1)\tau)^2 \over 4
\tau} \right)- \\[4mm] \displaystyle { \left[\rho(u)\right]^{\delta_1-\delta+1} \over 2\sqrt{ \pi\tau}} 
\exp\left(-{(z+ \ln\rho(u) - (\delta-1-\delta_1)\tau)^2 \over 4 \tau} \right)~ ,
\end{array}
\end{equation}
where $\delta$ and $\tau$ are defined in \eqref{lambzero}.
Taking into account the relation
\begin{equation}
\rho(u) = \rho(t) e^{(\delta_1 -\delta_0)\tau}~,
\end{equation}
we rewrite expression (\ref{solpsievent}) as
\begin{equation}\label{solpsieventqeg}
\begin{array}{c} \displaystyle
\varphi(z;\theta, u) = {1 \over 2\sqrt{ \pi\tau}} \exp\left(-{(z-
\ln\rho(t)- (\delta -1-\delta_0)\tau)^2 \over 4
\tau} \right)- \\[4mm] \displaystyle { \left[\rho(t)\right]^{\delta_0-\delta+1} \over 2\sqrt{ \pi\tau}} 
\exp\left(-{(z+ \ln\rho(t) - (\delta-1-\delta_0)\tau)^2 \over 4 \tau} +(\delta_0 -\delta_1) z \right)~ ,
\end{array}
\end{equation}
By substitution in \eqref{eq:sdmmosit}, this concludes the proof of Lemma 3.
\end{proof}

Performing the change of variable from birthdate $u$ to 
age $\theta=t-u$ in (\ref{meandensgeneral}), and accounting for assumption~\ref{assumption1}, i.e. the fact that
$ \nu(t) = \nu_0\, e^{d\cdot t}~$, leads to
\begin{equation}\label{meandenthetaint0}
g(s,t) = \nu(t) \int_0^{\theta_0} e^{-(d+h)\theta} \E \[f\(s,t|t-\theta,\tilde s_0\)\] d\theta~ ,
\end{equation}
where $\theta_0 = t-t_0$ is the age of the given economy.  $\E \[f\(s;t,|t-\theta,\tilde s_0\)\]$
denotes the statistical average of $f\(s;t,\theta|\tilde s_0\)$ over the random variable $\tilde s_0$. Inasmuch as
$t_0$ should not be smaller than $t_*$ given by \eqref{eq:birthdate}, 
we should thus have $\theta_0< \theta_* := \frac{\ln \rho(t)}{c_0-c_1}$.

As a byproduct, the mean density of firm sizes, conditional on $\tilde s_0=s_0$ is
\begin{equation}\label{meandenthetaint}
g\(s,t|\tilde s_0=s_0\) = \nu(t) \int_0^{\theta_0} e^{-(d+h)\theta} f\(s,t|t-\theta,\tilde s_0=s_0\) d\theta~ .
\end{equation}
Thus, substituting \eqref{fphirelpartic} into \eqref{meandenthetaint} yields
\begin{equation}\label{gthruGtos}
g(s,t|\tilde s_0=s_0) = \frac{\tilde{\nu}(t)}{s}\, G\left(\ln\left({s \over
s_{\min}(t)}\right);t,\tau_0\right)~ , \qquad \tilde{\nu}(t) = {2 \nu(t)
\over \sigma^2 }~ ,
\end{equation}
with
\begin{equation}\label{Gzdef}
G(z;t,\tau_0) := \int_0^{\tau_0} e^{-\eta \tau} \varphi(z;t,\tau) d\tau~ ,
\end{equation}
where $\varphi(z;t,\theta)$ is given by \eqref{solpsievent} while 
\begin{equation}
\tau_0 := {\sigma^2 \over 2}\, \theta_0 \qquad (\tau_0<\tau_*)~ , \qquad \eta :=
\frac{2}{\sigma^2}\,(d+h)~ .
\end{equation}
The substitution of $\varphi(z;t,\theta)$ from \eqref{solpsievent} into the integral \eqref{Gzdef} leads to two integrals, which can be reduced to
\begin{equation}\label{tableinttheta}
\mathcal{I}(z,\theta,\alpha, \beta) := \int_0^\theta \exp\left(-{(z-\alpha
\tau)^2 \over 4 \tau}- \beta \tau\right) {d\tau \over 2\sqrt{\pi \tau}}~ ,
\end{equation}
whose expression can be obtained by the tabulated integral (7.4.33) in \citeasnoun{AB1965} by the change of variable $u=\sqrt{\tau}$. This leads to 
\begin{equation}\label{exprGtaustze}
\begin{array}{c} \displaystyle
G(z;t,\tau_0) = {1 \over 2\alpha(\eta)} \times \\[4mm]
\displaystyle \Bigg\{ e^{{1 \over 2} \left(\alpha z_- -
\alpha(\eta)|z_-|\right)} \text{erfc} \left( {|z_-| - \tau_0 \alpha(\eta) \over
2 \sqrt{\tau_0}}\right) - e^{{1 \over 2} \left(\alpha z_- + \alpha(\eta)
|z_-|\right)} \text{erfc} \left(
{|z_-| + \tau_0 \alpha(\eta) \over 2 \sqrt{\tau_0}}\right)- \\[4mm] \displaystyle
\rho(t)^{-\alpha} \Big[e^{{1 \over 2} \left(\alpha z_+ - \alpha(\eta) |z_+|
\right)} \text{erfc} \left( {|z_+| - \tau_0 \alpha(\eta) \over 2
\sqrt{\tau_0}}\right) - e^{{1 \over 2} \left(\alpha z_+ + \alpha(\eta)
|z_+|\right)} \text{erfc} \left( {|z_+| + \tau_0 \alpha(\eta) \over 2
\sqrt{\tau_0}}\right) \Big]\Bigg\}~ ,
\end{array}
\end{equation}
with
\begin{equation}\label{skappa def}
\alpha:= \delta - 1 - \delta_0~, \qquad \alpha(\eta) := \sqrt{\alpha^2 + 4 \eta}~ , \qquad z_- := \ln \frac{s}{s_0(t)}~ , \qquad z_+ := \ln \frac{s \cdot s_0(t)}{s_{\min}(t)^2} ~.
\end{equation}

For an old enough economy, i.e., when $\sqrt{\tau_0} \gg 1/\alpha(\eta)$, we can expand expression \eqref{exprGtaustze} to obtain
\begin{equation}\label{Gtauinf}
G_\infty(z;t) = {1 \over \alpha(\eta)} \left[e^{{1 \over 2} \left(\alpha z_- -
\alpha(\eta)|z_-|\right)} -\rho(t)^{-\alpha} e^{{1 \over 2} \left(\alpha z_+ -
\alpha(\eta) |z_+| \right)} \right]~ .
\end{equation}
Substituting this last expression into equation \eqref{gthruGtos} for the mean
density of firms sizes, and after making explicit the $s$-dependence of
the variable $z$, we finally get 
\begin{equation}\label{gstlimit}
g(s,t|\tilde s_0=s_0) = {\tilde{\nu}(t) \over s \alpha(\eta)}\,
\begin{cases}
\(\frac{s}{s_0(t)}\)^{{1 \over 2}(\alpha-\alpha(\eta))} \left(1-\(\frac{s_0(t)}{s_{\min}(t)}\)^{-\alpha(\eta)} \right)
~ , & s > s_0(t)~ , \\[3mm]
\(\frac{s}{s_0(t)}\)^{{1 \over 2} (\alpha+\alpha(\eta))} - \(\frac{s_0(t)}{s_{\min}(t)}\)^{-\alpha(\eta)} \(\frac{s}{s_0(t)}\)^{{1 \over 2} (\alpha-\alpha(\eta))}~ , &  s_0(t)> s > s_{\min}(t)~ .
\end{cases}
\end{equation}
for large $\tau_0\gg \alpha(\eta)^{-1}$, with $s_0(t)=s_0 e^{c_0 \cdot t}$, as defined by \eqref{lambzero}.

According to assumption~\ref{assumption:iii-a}, the expectation of $g(s,t|\tilde s_0)$ with respect to $\tilde s_0$ provides us with the unconditional mean density of firm sizes
\be
\label{dens c<0 expr}
g(s,t) \approx {\tilde{\nu}(t) \over s \alpha(\eta)} \cdot \( \frac{\E\[\tilde s_0^m\]^{1/m} e^{c_0 \cdot t}}{s}\)^m, \qquad \text{as } s \to \infty~~{\rm and}~ t \to \infty~,
\ee
where $m$ is given by (\ref{eq:m}).
This expression (\ref{dens c<0 expr}) justifies the statement of proposition~\ref{prop1} and concludes the proof.

\subsection{Growth rate of the overall economy: Proof of proposition~\ref{prop2} \label{app2}}

Using the same machinery as in appendix~\ref{app:prop1}, we define the total size of the economy at time $t$ as
\bea
\tilde \Omega(t) &:=& \sum_{i \in \mathbb N:~ t_i \le t} S_i(t) \nonumber \\
&=& \int_{t_0}^{t} s \cdot N_t(du \times ds)~ .
\eea
Under the assumptions of proposition~\ref{prop1}, by theorem 6.4.V.iii in \citeasnoun{DVJ2007}, we get
\begin{eqnarray}
\Omega(t) &:=& \E \[\tilde \Omega(t)\]  \label{eq:lkesu} \\
&=& \nu(t) \int_0^{\tau_0} e^{-\eta \cdot \tau} \E\[S(t,\tau)\] d\tau~ . \label{eq:dlkhug}
\end{eqnarray}
For simplicity, let us consider the case where $s_{\min}=0$. This assumption is not necessary, but greatly simplifies the calculation. Under this assumption, the size of an incumbent firm follows a geometric Brownian motion so that
\be
\label{eq:mkdsh}
\E\[S(t,\tau)\] = s_0(t) e^{(\delta - \delta_0) \tau}~,
\ee
where $\delta$, $\delta_0$ and $s_0(t)$ are defined in \eqref{lambzero}.
Substituting (\ref{eq:mkdsh}) into \eqref{eq:dlkhug} gives
\bea
\Omega(t) &=& \nu(t) \cdot s_0(t) \int_0^t e^{(\mu-c_0 - h-d) u}  du~,\\
&=& \int_0^t e^{(\mu-h) \cdot (t-u)} \nu(u) \cdot s_0(u) du~,\\
&=& \int_0^t e^ {(\mu-h) \cdot (t-u)} dI(u)~. \label{eq:mesurhg}
\eea
This last equation shows that $\mu-h$ is the return on investment of the economy. By integration, we get the limit growth rate of the economy
\be
\lim_{t \to \infty} \frac{d \ln \Omega(t)}{dt} =
\begin{cases}
\mu - h~, & \mu - h > d + c_0 \\[3mm]
d + c_0~, & \mu - h \le d + c_0
\end{cases}~ .
\ee
This concludes the proof of proposition~\ref{prop2} when $s_{\min}=0$. When $s_{\min} \neq 0$ and grows at the rate $c_1 \geq 0$, the result can still be proved along the same lines but at the price of more tedious calculations since the expectation in \eqref{eq:mkdsh} involves eight error functions.

\subsection{Representativeness of the mean-distribution of firm size: Proof of proposition~\ref{prop3} \label{app3}}

The proof of proposition~\ref{prop3} follows from lemma 6.4.VI in \citeasnoun{DVJ2007} which states that a marked point process that has mark kernel $F_m(s,t|u)$, and for which the Poisson ground process has intensity measure $\nu(u) du$, is equivalent to a Poisson process on the product space with intensity measure $\Lambda(du \times ds) = F_m(ds,t|u) \cdot \nu(u) du$.

Thus, under the assumptions \ref{assumption1} and \ref{assumption:iii-a}, using the notations of appendix~\ref{app:prop1}, the marked point process $\{t_i, S_i(t)\}_{i \in \mathbb N}$ is a compound Poisson process with intensity measure $\Lambda (du \times ds)$. Consequently
\bea
\Pr \[ \tilde N(s,t) = n \] &=& \Pr \[ N_t \( [t_0,t) \times [s, \infty) \) = n \]~, \nonumber \\
&=& \frac{\(\int_{[t_0,t) \times [s, \infty)} F_m(ds,t|u) \cdot \nu(u) du\)^n}{n!} \exp\(-\int_{[t_0,t) \times [s, \infty)} F_m(ds,t|u) \cdot \nu(u) du\)~,   \nonumber \\
&=&\frac{N(t,s)^n}{n!} \cdot e^{-N(s,t)},
\eea
by lemma~\ref{lemma1}.

\end{document}